\documentclass[showpacs,preprintnumbers,amsmath,amssymb,12pt]{article}
\usepackage{a4wide,amsmath,amsthm,epsfig,graphicx}
\usepackage{amsmath,amsthm,amssymb}
\usepackage{amsfonts}
\usepackage{graphics}
\usepackage{dsfont}
\usepackage{bbm}
\usepackage{bm}
\usepackage{xcolor}
\usepackage{dcolumn}
\usepackage{fancyhdr,fancybox}
\usepackage[svgnames]{pstricks}
\usepackage{pst-text}
\usepackage{pst-plot,pst-eucl,pstricks-add}
\usepackage{caption}
\usepackage{subcaption}
\usepackage{systeme, mathtools}
\usepackage{algorithm}
\usepackage{algpseudocode}
\usepackage{slashbox}
\usepackage{placeins}

\algrenewcommand\alglinenumber[1]{{\sffamily\footnotesize#1}}
\algnewcommand\Not{\textbf{not}}
\algnewcommand\FALSE{FALSE}
\algnewcommand\TRUE{TRUE}

\textwidth=15cm
\textheight=22cm
\oddsidemargin=0.9cm
\evensidemargin=0.5cm

\newtheorem{thm}{Theorem}[section]
\newtheorem{cor}[thm]{Corollary}

\newtheorem{prop}[thm]{Proposition}

\newtheorem{remark}{Remark}

\newcommand{\refeqq}[1]{~(\ref{#1})}
\newcommand{\myref}[1]{~\ref{#1}}
\newcommand{\mycite}[1]{~\cite{#1}}

\newcommand{\R}{\mathbb{R}}

\newcommand{\rv}{\textit{rv}}
\newcommand{\id}{\textit{id}}
\newcommand{\iid}{\textit{iid}}
\newcommand{\sd}{\textit{sd}}

\newcommand{\pdf}{\textit{pdf}}

\newcommand{\chf}{\textit{chf}}
\newcommand{\che}{\textit{che}}
\newcommand{\cgf}{\textit{cgf}}

\newcommand{\Levy}{L\'{e}vy}

\newcommand{\Pqo}{\bm{P}\hbox{-\emph{a.s.}}}
\newcommand{\eqd}{\stackrel{d}{=}}

\newcommand{\EXP}[1]{\bm{E}\left[{#1}\right]}
\newcommand{\VAR}[1]{\bm{V}\left[{#1}\right]}

\newcommand{\lch}{\textit{lch}}

\newcommand{\norm}{\mathcal{N}}

\newcommand{\poiss}{\mathcal{P}}

\newcommand{\unif}{\mathcal{U}}

\newcommand{\gam}{\mathcal{G}}

\newcommand{\arem}{$a$-remainder}

\newcommand{\cts}{\mathcal{TS}}

\newcommand{\sncts}{\mathcal{SNTS}}

\vsize=29.7cm
\hsize=20cm
\topmargin -1.5cm
\textheight 24cm
\oddsidemargin -0.5cm
\textwidth 15.2cm
\usepackage[a4 paper, top = 1.3 cm, bottom = 1.5 cm, left = 1.3 cm, right = 1.3 cm]{geometry}
\vsize=29.7cm \hsize=20cm \topmargin -1.5cm \textheight 24cm
\oddsidemargin +0.5cm \textwidth 15.2cm

\title{\Huge \textbf{Normal Tempered Stable Processes and the Pricing of Energy Derivatives
}}

\author{Piergiacomo \textsc{Sabino} \footnote{piergiacomo.sabino@eon.com
\newline
The views, opinions, positions or strategies expressed in this article are those of the author
and do not necessarily represent the views, opinions, positions or strategies of, and should not be
attributed to E.ON SE.}
\\
Quantitative Risk Management\\
E.ON SE\\
\vspace{5pt}
 Br\"{u}sseler Platz 1, 45131 Essen, Germany
}

\date{}
\begin{document}
    \maketitle \thispagestyle{empty}
        \begin{abstract}
\noindent				In this study we consider the pricing of energy derivatives when the evolution of spot prices is modeled with a normal tempered stable driven Ornstein-Uhlenbeck process. Such processes are the generalization of normal inverse Gaussian processes that are widely used in energy finance applications. We first specify their statistical properties calculating their characteristic function in closed form.  
This result is instrumental for the derivation of non-arbitrage conditions such that the spot dynamics is consistent with the forward curve without relying on numerical approximations or on numerical integration. Moreover, we conceive an efficient algorithm for the exact generation of the trajectories which gives the possibility to implement Monte Carlo simulations without approximations or bias. 
We illustrate the applicability of the theoretical findings and the simulation algorithms in the context of the pricing of different contracts, namely, strips of daily call options, Asian options with European style and swing options. 
Finally, we present an extension to future markets.
\\
				
\noindent \emph{Keywords}: \Levy-driven Ornstein-Uhlenbeck
Processes; Normal Tempered Stable processes; Simulations; Energy Markets; Derivative Pricing
        \end{abstract}


        \section{Introduction}\label{sec:introduction}
				
Energy and commodity markets exhibit mean-reversion, seasonality and sudden spikes; mean-reversion in particular, cannot be captured by ordinary \Levy\ processes. To this end, the modeling based on non-Gaussian Ornstein-Uhlenbeck (OU) processes has received considerable attention in the recent literature in an attempt to accommodate features such as jumps, heavy tails which are well evident in real data. 
				
A common approach to describe the evolution of day-ahead (spot) prices assumes the following multi-factor dynamics 
\begin{equation*}
S(t) = F(0,t)\,\exp\left(h(t) + X(t) \right) = F(0,t)\,\exp\left(h(t) + \sum_n^NX_n(t) + \sum_{m}^L Y_m(t) \right).
\end{equation*}				
$F(0,t)$ represents the forward curve derived from quoted products and reflects the seasonality, $X_n(\cdot), n=1, \dots$ are independent mean-reverting \Levy-driven OU processes and $Y_m(\cdot), m=1, \dots$ are independent plain \Levy\ processes. Due to the Lemma 3.1 in Hambly
et al.\mycite{HHM11},  the risk-neutral conditions are met when the deterministic function $h(t)$ is consistent with the forward curve such that
\begin{equation*}
	h(t) = -m_X(1, t)
\end{equation*}
where $m_X(s, t)$ is the logarithm of the moment generating function of $X(t)$. Standard examples in the Gaussian framework are the one factor model ($N=1$, $M=0$) of Lucia and Schwarz\mycite{LS02} and the two factor version ($M=N=1$) of Schwartz and Smith\mycite{SchwSchm00}. 
The extension to OU-based models driven by a normal inverse Gaussian (NIG)
process or by a variance gamma process can be found in Benth et al\mycite{BKM07}, Cummins et al.\mycite{CKM17, CKM18} and Sabino\mycite{Sabino20b}. Finally, a third type of market models is based on mean-reverting jump-diffusion OU processes as originally suggested in Cartea and Figueroa\mycite{CarteaFigueroa} and further analyzed for instance in Kjaer\mycite{Kjaer2008}, Humbly et al\mycite{HHK09} and recently in Sabino and Cufaro Petroni\mycite{cs20_2}.

On the other hand, in order to select a realistic and viable model for derivative pricing, one has to first answer the following questions.
	\begin{itemize}
		\item Do we know the characteristic function and the moment generating function of $X(t)$ to determine the non-arbitrage conditions.
		\item Do we know anything about the statistics of the process $X(\cdot)$? How viable is the parameters estimation?
		\item Do we know how to generate the trajectories of $X(\cdot)$ in order to implement Monte Carlo simulations?
	\end{itemize}
The availability of simulation techniques of easy implementation is important for analysis,
validation and estimation purposes. Indeed, direct likelihood analysis is often
impracticable, whereas Monte Carlo (MC) based techniques and generalized method of moments (GMM) approaches can be a viable route to estimate the model parameters.				

In this paper we study normal tempered stable (NTS) processes, which generalize NIG processes, and focus on their OU counterpart. Benth and Benth\mycite{BenthBenth04}, Benth et al.\mycite{BKM07, BDPL18}  illustrate the suitability of NIG-driven OU models in various contexts, namely in gas, oil, power markets and also in the pricing of wind derivatives. On the other hand, these applications rely on approximated solutions to the answers mentioned above that may produce biased results, for instance in the pricing of forward start contracts as we shall show.

We address each of the aforementioned questions for models that are based on symmetric normal tempered stable driven OU processes. The first contribution then is the derivation of the characteristic function of their transition law in closed form. This result is instrumental to determine their statistical properties and compared to the current state of affairs (see once again Benth and Benth\mycite{BenthBenth04} and Benth et al.\mycite{BKM07}) we can derive non-arbitrage conditions without resorting to a numerical approximation or a numerical integration. 

The second contribution consists in defining an efficient algorithm for the exact simulation of the trajectories of symmetric normal tempered stable (and therefore NIG) driven OU processes which is particularly suitable for forward start contracts where the standard Euler scheme would return biased results. Indeed, even though it is common practice to rely on such an approximated scheme, we show it can lead to biased results. We also propose an alternative approximation scheme that apparently outperforms the Euler approximation having the same computational effort.

We illustrate the applicability of our results to the pricing of three common derivative contracts in energy markets, namely a strip of daily call options, an Asian option with European style and a swing option.
In the first example we consider a one-factor model purely driven by a single mean-reverting OU-NTS process basically coinciding with that in Benth and Benth\mycite{BenthBenth04} and Benth et al.\mycite{BKM07}. We make use of the explicit knowledge of the characteristic function  to implement the pricing with the FFT-based technique of Carr and Madan\mycite{Carr1999OptionVU}, and then compare the outcomes to those obtained via MC simulations. We observe that such a model is not really suitable for the pricing of contracts having long maturities which in contrast require a second NTS factor. The resulting dynamics is a NTS version of the model of de Jong and Schneider\mycite{DJS09} in the Gaussian framework.

In the second example, we calibrate the two factor model taking the German day-ahead and month-ahead NCG prices and evaluate a forward start Asian option. Based on our theoretical results, we suggest to estimate the parameters using the alternative approximation, whereas to employ the exact simulation scheme for the pricing of energy derivatives especially if they are forward start contracts. 
In addition, we have shown that the proposed simulation algorithm, combined with the Least Squares Monte Carlo approach of  Hambly et al\mycite{HHK09}, provides an efficient and accurate pricing of a one year $120-120$ swing option.

Finally, our results are not only restricted to OU-processes and spot models. Indeed, we shall detail how they can be easily extended to forward markets in order to capture the Samuelson effect and different implied volatility profiles displayed by options futures in the spirit of the works of Benth et al.\mycite{BPV19}, Latini et al.\mycite{LPV19} and Piccirilli et al.\mycite{PSV20}.			
	The paper is organized as follows: Section\myref{sec:Preliminaries} introduces the notation and presents the mathematical preliminaries. 
	In Section\myref{sec:ou:ncts} we focus on the symmetric normal tempered stable processes of OU type, we derive the characteristic function of their transition law, and present the algorithms for their exact simulation. We also carry out numerical experiments demonstrating their efficiency. The financial application of
these results is illustrated in Section\myref{sec:fin:app} in the context of the pricing of energy derivative
contracts, namely daily strips of call options, Asian options with European exercise
and swing options written on the day-ahead spot price. Furthermore, we detail how our findings are not restricted to the modeling of the spot dynamics and can be easily extended to future markets.
Finally Section\myref{sec:conclusions} concludes
the paper with an overview of future investigations and possible further applications.
\section{Notations and preliminary remarks}\label{sec:Preliminaries}
In this section we present the concepts of normal variance-mean mixtures and of non-Gaussian Ornstein-Uhlenbeck processes that will be instrumental for the modeling of energy markets. We also introduce the notation and the shortcuts that will be used throughout the paper.

\subsection{Notation}\label{sub:sec:notation}
The function $\Gamma(x)$ represents the Euler gamma function, in addition, we write $\norm(\mu, \sigma^2)$ to denote the Gaussian distribution with mean $\mu$ and variance $\sigma^2$. Moreover, we write $\unif([0,1])$ to denote the uniform distribution in $[0,1]$ and $\poiss(\lambda)$ to denote the Poisson distribution with parameter $\lambda>0$.
We use the shortcuts \id\ and \sd\ for \emph{infinitely divisible} and \emph{self-decomposable} distributions, respectively. We use the shortcut \rv\ for \emph{random variable} and \iid\ for \emph{independently and identically distributed}, whereas we use \chf, \lch, \cgf\ and \pdf\ as shortcuts for \emph{characteristic function}, \emph{logarithmic characteristic}, \emph{cumulant generating function} and \emph{density function}, respectively.  


\subsection{Normal variance-mean mixtures}\label{sub:sec:normal:mixtures}
Consider a standard normal \rv\ $X\sim\norm(0,1)$ and an independent \rv\ $V$ with \pdf\ $f_V(x)$ defined on the positive real axis $\R^+$, then a \rv\ $Y$ is said to be distributed according to a \emph{normal variance-mean mixture} with mixture \pdf\ $f_V(x)$ if it has the form
\begin{equation}
Y = \mu + \theta\,V + \sigma\sqrt{V}\,X
\label{eq:nvmm}
\end{equation}
where $\theta$, $\mu$ and $\sigma>0$ are real numbers. The conditional distribution of $Y$ given $V$ is thus a Gaussian distribution $\norm(\mu + \theta V, \sigma^2\,V)$, whereas the unconditional \pdf\ and \chf\, denoted $f_Y(x)$ and $\varphi_Y(u)$ respectively, are 
\begin{equation}
f_Y(x) = \int_{0}^{\infty} \frac{1}{\sqrt{2\,\pi\,\sigma^2\,v}}
\exp\left(-\frac{\left(x - \mu - \theta\,v\right)^2}{2\,\sigma^2\,v}\right)dv
\label{eq:pdf:nvmm}
\end{equation} 
\begin{equation}
\varphi_Y(u) = e^{i\,u\,\mu}\varphi_V\left(\theta\,u + \frac{i\,\sigma^2\,u^2}{2}\right)
\label{eq:chf:nvmm}
\end{equation}
where $\varphi_V(u)$ denotes the \chf\ of $V$.

Assuming for simplicity $\mu=0$, the law of $Y$ coincides with the distribution of the position of a subordinated Brownian motion (BM) where the law of the subordinator at a fixed time is given by $f_V(x)$. We recall that a subordinator $L(\cdot)$ is a pure jump \Levy\ process with non-decreasing trajectories and \Levy\ measure $\nu_L$ such that 
\begin{equation*}
\int_0^{\infty}(x\wedge 1)\nu_L(dx)<\infty, \quad \nu_L((-\infty, 0)) = 0.
\end{equation*} 
Denoting $W(\cdot)$ the standard Wiener process, a subordinated BM  
\begin{equation}
Y(t) = \theta\,L(t) + \sigma\,W(L(t))
\label{eq:sub:bm}
\end{equation}
is thus a new \Levy\ process  with time-change given by $L(\cdot)$.

Furthermore, a \Levy\ process $L(\cdot)$ is said to be a classic tempered stable (dubbed TS) subordinator if its \Levy\ measure has density 
\begin{equation}
\ell_L(x) = c\,\frac{e^{-\beta\,x}}{x^{1+\alpha}}\mathds{1}_{x\ge 0}\label{eq:cts:density}
\end{equation}
where $c, \beta$ are all positive numbers, $0<\alpha<1$. Hereafter $\cts(\alpha, \beta, c\,t)$ denotes the law of such a subordinator at time $t$. This class is quite flexible and contains the  inverse Gaussian (IG) distribution that corresponds to $\alpha=1/2$ and the gamma distribution for the limiting case $\alpha=0$. 
For sake of completeness, we remark that we are referring to classical TS processes because alternative tempering functions rather than the exponential function can be used in\refeqq{eq:cts:density} (see  Rosi\'nski\mycite{ROSINSKI2007677}). An overview of such processes, named general tempered stable processes, can be found in Grabchak\mycite{Grabchak16}.

By time-changing a BM with drift $\theta$ and volatility $\sigma$ with a TS subordinator, we obtained the so-called \emph{normal (classical) tempered stable} process, hereafter dubbed NTS;
taking $\alpha=1/2$ we have the well-known NIG process. On the other hand, due to the scaling properties of the TS processes, it is sufficient to consider subordinators with $\EXP{L(t)}=t$ in which case form a one-parameter family. Accordingly, it is convenient to represent their \Levy\ density as follows (see Cont and Tankov\mycite{ContTankov2004} equation 4.19)
\begin{equation}
\ell_L(x) = \frac{1}{\Gamma(1-\alpha)}\left(\frac{1-\alpha}{\nu}\right)^{1-\alpha}\,\frac{e^{-(1-\alpha)/\nu}}{x^{1+\alpha}}\mathds{1}_{x\ge 0}\label{eq:cts:sub:density}
\end{equation}
where $\nu=\VAR{L(1)}$, $\beta=\frac{(1-\alpha)}{\nu}$ and $c=\frac{1}{\Gamma(1-\alpha)}\left(\frac{1-\alpha}{\nu}\right)^{1-\alpha}$. 
Furthermore, denoting $\varphi_Y(u, t)$ the \chf\ of $Y(t)$ and $\psi_Y(u) = t^{-1}\log(\varphi_Y(u, t))$ the characteristic exponent (\che) of $Y(\cdot)$, based on\refeqq{eq:chf:nvmm} it results
\begin{equation}
\psi_Y(u) = \frac{1 - \alpha}{\alpha\,\nu}\left[1 - \left(1 + \frac{\nu(u^2\sigma^2/2 - i\theta\,u)}{1-\alpha} \right)^{\alpha} \right].
\label{eq:che:ncts}
\end{equation}

\subsection{OU-TS processes}\label{sub:sec:ou:ts}
Consider an Ornstein-Uhlenbeck (OU) process $X(\cdot)$ solution of the stochastic differential equation
            \begin{eqnarray}\label{eq:genOU_sde}
              dX(t) &=&  -bX(t)dt + dL(t), \quad\qquad X(0)=X_0\quad \Pqo\qquad
              b>0
            \end{eqnarray}
namely,
\begin{eqnarray}
X(t) &=& X_0\,e^{-bt} + Z_L(t),	 \qquad\quad Z_L(t)=\int_0^te^{-b\,(t-s)}dL(s) \label{eq:sol:OU}
\end{eqnarray}
where $L(\cdot)$ is a \Levy\ process, named \emph{Background Driving \Levy\ Process} (BDLP) accordingly.
Following the convention in Barndorff-Nielsen and Shephard\mycite{BNSh01}, if $L(\cdot)$ is a TS subordinator $X(\cdot)$ is called OU-TS process.

There is a close relation between the concept of self-decomposability and the theory of \Levy-driven OU processes, indeed as observed in Barndorff-Nielsen et al.\mycite{BJS1998}, the solution process\refeqq{eq:sol:OU} is stationary if and only if its
\chf $\varphi_X(u,t)$ is constant in time and steadily coincides
with the \chf\ $\overline{\varphi}_X(u)$ of the \sd\ invariant
initial distribution that turns out to be decomposable according to
\begin{equation*}
\overline{\varphi}_X(u)=\overline{\varphi}_X(u\,e^{-b\,
t})\varphi_Z(u,t)
\end{equation*}
where now, at every given $t$, $\varphi_Z(u,t)=e^{\psi_Z(u,t)}$ denotes the \chf\ of the \rv\ $Z(t)$ in\refeqq{eq:sol:OU} and $\psi_Z(u,t)$ its \lch. We remark that the process $Z(\cdot)$ is not a \Levy\ process, but rather an additive process. 

We recall that a law with \chf\
$\eta(u)$ is said to be \sd\ (see Sato\mycite{Sato} or Cufaro
Petroni~\cite{Cufaro08}) when for every $0<a<1$ we can find another
law with \chf\ $\chi_a(u)$ such that
\begin{equation}\label{aremchf}
    \eta(u)=\eta(au)\chi_a(u).
\end{equation}
Of course, a \rv\ $X$ with \chf\ $\eta(u)$ is also
said to be \sd\ when its law is \sd, and looking at the definitions
this means that for every $0<a<1$ we can always find two
\emph{independent} \rv's -- a $Y$ with the same law of $X$, and a
$Z_a$ with \chf\ $\chi_a(u)$ -- such that in distribution
\begin{equation}\label{sdec-rv}
    X\eqd aY+Z_a
\end{equation}
Hereafter the \rv\ $Z_a$ will be called the \emph{\arem} of $X$ and
in general has an \id\ (see Sato\mycite{Sato}).
This last statement apparently means that the law of $Z_L(t)$ in the
solution\refeqq{eq:sol:OU} coincides with that of the \arem\ of the
\sd, stationary law $\overline{\varphi}_X$ provided that $a=e^{-b\,
t}$. It is easy indeed to see
from\refeqq{eq:sol:OU} that the \lch\ $\psi_X(u,t)$ of the time homogeneous
transition law with a degenerate initial condition $X(0)=x_0,\;\Pqo$
is
\begin{equation}
    \psi_X(u,t|x_0)=iux_0e^{-bt} + \psi_{Z_L}(u,t).
\label{eq:cumulant:function}
\end{equation}
and can also
be written in terms of the corresponding \che\ $\psi_L(u)$ in
the form
\begin{equation}
    \psi_X(u,t|x_0)=iux_0e^{-bt} + \psi_{Z_L}(u,t) = iux_0e^{-b\, t} + \int_0^t\psi_L\left(ue^{-b\,s}\right)ds.
\label{eq:lch:ou}
\end{equation}
\section{Symmetric OU-NTS processes and their exact simulation}\label{sec:ou:ncts}
In this section we focus on OU processes whose BDLP $Y(\cdot)$ is a symmetric NTS process (hereafter dubbed SNTS) with $\theta=0$ and with $L(\cdot)$ a TS subordinator. Although the law of the SNTS process at a fix time $t$ boils down into a three-parameters distribution, in the following we will keep the notation $\sncts(\sigma, \alpha, \beta, c\,t)$ with four parameters under the assumption that 
\begin{equation*}
\beta=\frac{1-\alpha}{\nu},\quad\quad c = \frac{1}{\Gamma(1-\alpha)}\left(\frac{1-\alpha}{\nu}\right)^{1-\alpha}=\frac{\beta^{1-\alpha}}{\Gamma(1-\alpha)}.
\end{equation*}
In virtue of\refeqq{eq:sol:OU} we write
\begin{eqnarray}
N(t) &=& N(0)\,e^{-bt} + Z_Y(t)\label{eq:sol:OUNCTS_N},\\
Z_Y(t)&=&\int_0^te^{-b\,(t-s)}dY(s) = \sigma\int_0^te^{-b\,(t-s)}dW(L(s)) \label{eq:sol:OUNCTS_Z}
\end{eqnarray}
where $N(\cdot)$ denotes a OU-SNTS process relatively to which the following proposition holds.
\begin{prop}\label{prop:rv:ou:sncts}
Denoting $a=e^{-b\,t}$ and $\omega=a^2$, the pathwise solution\refeqq{eq:sol:OUNCTS_N} with $N(0)=N_0, \Pqo$, is in distribution the sum of three independent \rv's
\begin{equation}
N(t) = a\,N_0 + Z_Y(t) \eqd a\,N_0 + N_1 + N_2
\label{eq:prop:ou:sncts:dec}
\end{equation}
where 
\begin{equation}
N_1 \eqd \sigma\sqrt{M_1}\,X_1
\label{eq:prop:ou:sncts:n1}
\end{equation}
with $X_1$, $M_1$ independent, $X_1\sim\norm(0, 1)$ and $M_1\sim\cts\!\left(\alpha,
\frac{\beta}{\omega}, c\,\frac{1 - \omega^\alpha}{2\,\alpha\, b}\right)$,
therefore $N_1$ is distributed according to a $\sncts\left(\sigma, \alpha,
\frac{\beta}{\omega}, c\,\frac{1 - \omega^\alpha}{2\,\alpha\, b}\right)$ law. Moreover, 
\begin{equation}
N_2 \eqd \sigma\sqrt{M_2}\,X_2
\label{eq:prop:ou:sncts:n2}
\end{equation}
with $X_2$, $M_2$ independent and independent of $X_1$ and $M_1$. In its turn,  
\begin{equation*}
M_2 =\sum_{k=0}^{P_\omega} J_k, \quad J_0=0, \Pqo
\end{equation*}
is a compound Poisson \rv\ where $P_\omega$ is an independent Poisson
\rv\ with parameter
\begin{equation}
 \Lambda_\omega =
 \frac{c\,\beta^{\alpha}\Gamma(1-\alpha)}{2\,b\,\alpha^2 \omega^{\alpha}}\,\left(1-\omega^{\alpha}+\omega^{\alpha}\log \omega^{\alpha}\right) \label{eq:ou:ncts:intensity}
\end{equation}
and $J_k, k>0$ are \iid\ \rv's with density
\begin{equation}
 f_J(x)=\frac{\alpha\, \omega^{\alpha}}{1-\omega^\alpha+\omega^{\alpha}\log \omega^{\alpha}} \int_1^{\frac{1}{\omega}}\frac{x^{-\alpha} \left(\beta\,
  v\right)^{1-\alpha}e^{-\beta v\, x}}{\Gamma(1-\alpha)}\, \frac{v^{\alpha}-1}{v}\, dv \label{eq:ouncts:jumps}
\end{equation}
namely, a mixture of a gamma law and a distribution with density
\begin{equation}
  f_V(v) = \frac{\alpha\, \omega^\alpha}{1-\omega^{\alpha}+\omega^{\alpha}\log
    \omega^{\alpha}}\, \frac{v^{\alpha}-1}{v}\qquad\quad 1\le v\le\,^1/_\omega.\label{eq:ou:sncts:mixture:rate}
\end{equation}
\end{prop}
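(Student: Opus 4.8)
The plan is to compute the logarithmic characteristic of $Z_Y(t)$ directly from the integral representation in\refeqq{eq:lch:ou}, using the known \che\ of the SNTS process, and then to recognize the resulting expression as the \lch\ of a sum of two independent random variables of the stated form. First I would recall that for the symmetric NTS process ($\theta=0$) the \che\ from\refeqq{eq:che:ncts} is $\psi_Y(u)=\frac{1-\alpha}{\alpha\nu}\bigl[1-\bigl(1+\tfrac{\nu u^2\sigma^2/2}{1-\alpha}\bigr)^{\alpha}\bigr]$, which with $\beta=(1-\alpha)/\nu$ becomes $\psi_Y(u)=\frac{\beta}{\alpha}\bigl[1-(1+\tfrac{u^2\sigma^2}{2\beta})^{\alpha}\bigr]$. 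Plugging $ue^{-bs}$ in place of $u$ and integrating over $s\in[0,t]$ gives $\psi_{Z_Y}(u,t)=\int_0^t\psi_Y(ue^{-bs})\,ds$; the substitution $w=e^{-2bs}$ (so $dw=-2bw\,ds$) turns this into an integral in $w$ from $\omega=a^2=e^{-2bt}$ to $1$, of the form $\frac{\beta}{2b\alpha}\int_\omega^1\frac{1}{w}\bigl[1-(1+\tfrac{u^2\sigma^2}{2\beta}w)^{\alpha}\bigr]\,dw$.

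The next step is to massage this single integral into a sum of two pieces whose exponentials are recognizable characteristic functions. The term $\int_\omega^1\frac{dw}{w}=-\log\omega=\log\omega^{-1}$ is elementary; the remaining term $-\frac{\beta}{2b\alpha}\int_\omega^1\frac{(1+\tfrac{u^2\sigma^2}{2\beta}w)^{\alpha}}{w}\,dw$ I would split by adding and subtracting a "frozen" version at $w=1$, or more directly integrate by parts to isolate the boundary contribution $(1+\tfrac{u^2\sigma^2}{2\beta})^{\alpha}$ from $w=1$ and a leftover integral. The boundary piece should combine to produce exactly the \lch\ of $N_1\eqd\sigma\sqrt{M_1}X_1$: by\refeqq{eq:chf:nvmm} the \chf\ of such a normal variance-mean mixture with $\theta=0$ is $\varphi_{M_1}(i\sigma^2u^2/2)$, and since $M_1\sim\cts(\alpha,\beta/\omega,\,c\,\tfrac{1-\omega^\alpha}{2\alpha b})$ its \chf\ (read off from\refeqq{eq:cts:density} via the \LKh\ formula for the TS subordinator) is $\exp\bigl[-c\,\tfrac{1-\omega^\alpha}{2\alpha b}\,\Gamma(-\alpha)\bigl((\tfrac{\beta}{\omega}+\tfrac{\sigma^2u^2}{2})^{\alpha}-(\tfrac{\beta}{\omega})^{\alpha}\bigr)\bigr]$, and with $c\Gamma(-\alpha)=-\Gamma(1-\alpha)/\alpha\cdot c/( \ldots)$ one checks the constants match. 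The residual integral, after the change of variable $v=1/w$ (so $v$ runs from $1$ to $1/\omega$) and a further rescaling, should reproduce $\psi_{N_2}(u,t)=\Lambda_\omega\bigl(\EXP{e^{iuN_2\mid \text{one jump}}}-1\bigr)$, i.e. the \lch\ of a compound Poisson variance-mean mixture, where the single-jump law is the gamma mixture with mixing density\refeqq{eq:ou:sncts:mixture:rate}: the factor $v^{\alpha}-1$ arises precisely from the derivative of $v^{\alpha}$ produced by integrating the $(1+\cdot w)^{\alpha}$ term, and the normalization $1-\omega^\alpha+\omega^\alpha\log\omega^\alpha$ is what makes $f_V$ and $f_J$ integrate to one, which also pins down $\Lambda_\omega$ in\refeqq{eq:ou:ncts:intensity}.

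To finish, I would verify that $f_J$ as written in\refeqq{eq:ouncts:jumps} is indeed the density of $\sigma\sqrt{G_v}\,X$ integrated against $f_V(v)\,dv$, where $G_v$ is a gamma random variable with the shape/rate read off from\refeqq{eq:cts:sub:density}; this is a direct application of\refeqq{eq:pdf:nvmm} followed by interchanging the order of the two integrals, using that the gamma density is $\frac{(\beta v)^{1-\alpha}}{\Gamma(1-\alpha)}x^{-\alpha}e^{-\beta v x}$ — note this exact kernel appears inside\refeqq{eq:ouncts:jumps}. Independence of $N_1$, $N_2$ and of $aN_0$ follows because $N_0$ is deterministic and because a sum of independent Lévy/Poisson-type contributions in the \lch\ corresponds to a convolution. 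The main obstacle I anticipate is the bookkeeping in the second step: correctly partitioning the integral $\int_\omega^1 w^{-1}(1+\tfrac{u^2\sigma^2}{2\beta}w)^{\alpha}\,dw$ so that one part is literally the TS-subordinator \chf\ with the shifted rate $\beta/\omega$ and the other part is manifestly of compound-Poisson form $\Lambda_\omega(\chi(u)-1)$ with $\chi$ a genuine \chf; keeping track of the constants $c$, $\Gamma(1-\alpha)$, $\Gamma(-\alpha)$, the factor $1/(2b\alpha)$ and the powers of $\omega$ through the two changes of variable ($w=e^{-2bs}$ and $v=1/w$) is where errors are easy to make, but no deep idea is needed beyond Fubini and the \LKh\ representation.
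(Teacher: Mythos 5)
Your proposal is correct and would go through, but it takes a genuinely different route from the paper. The paper's proof first conditions on the subordinator to show that $Z_Y(t)$ is a normal variance mixture whose mixing law is that of $\tilde{Z}_L(t)=\int_0^t e^{-2b(t-s)}\,dL(s)$ (the doubled speed $2b$, i.e. your $\omega=e^{-2bt}$, coming from the Brownian scaling), and then simply imports the decomposition $\tilde{Z}_L(t)\eqd M_1+M_2$ from Proposition 5.1 of Cufaro Petroni and Sabino\mycite{cs20_3}; composing with the Gaussian mixture immediately gives $N_1+N_2$. You instead stay at the level of the BDLP, write $\psi_{Z_Y}(u,t)=\int_0^t\psi_Y(ue^{-bs})\,ds$ from\refeqq{eq:lch:ou}, and re-derive the decomposition by splitting $\frac{\beta}{2b\alpha}\int_\omega^1 w^{-1}\bigl[1-(1+\frac{\sigma^2u^2}{2\beta}w)^{\alpha}\bigr]dw$ by hand: this amounts to a self-contained proof of the cited decomposition, transported through the substitution $u\mapsto i\sigma^2u^2/2$. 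The bookkeeping you worry about does close: with $A=\sigma^2u^2/(2\beta)$ and $v=1/w$, the identity you need is
\begin{equation*}
\int_1^{1/\omega}\Bigl[(v+A)^{\alpha}v^{-1-\alpha}+(1-v^{-\alpha})(v+A)^{\alpha-1}\Bigr]dv=\frac{1-\omega^{\alpha}}{\alpha\,\omega^{\alpha}}\,(1+A\,\omega)^{\alpha},
\end{equation*}
which follows at once from the antiderivative $\frac{1}{\alpha}(1-v^{-\alpha})(v+A)^{\alpha}$; the first integrand reproduces the \lch\ of $N_1$ (TS mixing law with rate $\beta/\omega$ and mass $c\,\frac{1-\omega^\alpha}{2\alpha b}$), the second the compound Poisson part, and, as you note, $\Lambda_\omega$ is exactly the constant that normalizes $f_V$, which also makes the $\log\omega$ terms cancel. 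What the paper's route buys is brevity and reuse of a published result; yours buys a verification of\refeqq{eq:ou:ncts:intensity}--\refeqq{eq:ou:sncts:mixture:rate} that does not rely on that reference. Two small slips to repair when writing it out: the \chf\ of $N_1$ is $\exp\bigl\{c\,\frac{1-\omega^\alpha}{2\alpha b}\,\Gamma(-\alpha)\bigl[(\beta/\omega+\sigma^2u^2/2)^{\alpha}-(\beta/\omega)^{\alpha}\bigr]\bigr\}$ \emph{without} your leading minus sign (recall $\Gamma(-\alpha)=-\Gamma(1-\alpha)/\alpha<0$, so $c\,\Gamma(-\alpha)=-\beta^{1-\alpha}/\alpha$), and $f_J$ in\refeqq{eq:ouncts:jumps} is the density of the gamma mixture $G_V$ itself, i.e. of the jumps of the subordinator-side variable $M_2$, not of $\sigma\sqrt{G_V}\,X$; the latter is the jump law of $N_2$, and the two levels are again linked by the substitution $u\mapsto i\sigma^2u^2/2$.
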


\begin{proof}
Based on the definition of the \chf\ $\varphi_{Z_Y}(u, t)$ of $Z_Y(t)$ and on the fact that $L(\cdot)$ and $W(\cdot)$ are independent, it results
\begin{eqnarray}
\varphi_{Z_Y}(u, t) &=& \EXP{exp\left(i\,u\, \sigma^2 \int_0^te^{-\,b\,(t-s)} dW(L(s)) \right)} = \nonumber\\
&& \EXP{\EXP{exp\left(i\,u\, \sigma^2\int_0^te^{-\,b\,(t-s)} dW(L(s)) \right)}\Bigg|L(t)} = \nonumber\\
&& \EXP{exp\left(-\frac{\sigma^2\,u^2}{2}\int_0^te^{-2\,b\,(t-s)} dL(s) \right)} = \zeta\left(\frac{i\,\sigma^2\,u^2}{2}, t\right),\label{eq:zeta}
\end{eqnarray}
where $\zeta(u, t)$ represents the \chf\ of $\tilde{Z}_L(t) = \int_0^te^{-2\,b\,(t-s)} L(s)$, therefore $Z_Y(t)$ is distributed according to a normal variance-mean mixture distribution whose mixture law is that with \chf\ $\zeta(u, t)$.
On the other hand, from Proposition 5.1 in Cufaro Petroni and Sabino\mycite{cs20_3}, $\tilde{Z}_L(t)$ can be written as the sum of two independent \rv's 
\begin{equation*}
\tilde{Z}_L(t) \eqd M_1 + M_2,\quad\quad \zeta(u, t) = \zeta_1(u)\, \zeta_2(u)
\end{equation*}
where $\zeta_1(u)$ is the \chf\ of $M_1\sim\cts\!\left(\alpha,
\frac{\beta}{\omega}, c\,\frac{1 - \omega^\alpha}{2\,\alpha\, b}\right)$,
whereas $\zeta_2(u)$ is the \chf\ of $M_2$ in the hypothesis.
Consequently $\varphi_{Z_Y}(u, t)$ has the form
\begin{equation*}
\varphi_{Z_Y}(u, t) = \zeta_1\left(\frac{i\,\sigma^2\,u^2}{2}\right)\zeta_2\left(\frac{i\,\sigma^2\,u^2}{2}\right)
\end{equation*}
hence
\begin{equation*}
Z_Y(t) \eqd N_1 + N_2
\end{equation*}
where $N_1$ and $N_2$ are two independent \rv's each distributed according to a normal variance mixture law whose mixture laws are the distributions of $M_1$ and $M_2$, respectively and that concludes the proof. 
\end{proof}
\begin{remark}
Because of\refeqq{eq:zeta}, it also holds
\begin{equation}
Z_Y(t) \eqd X\,\sqrt{M_1 + M_2}
\label{eq:rem:ou:sncts}
\end{equation}
 where $X\sim\norm(0, 1)$ and it turns out to be a more convenient and a computationally efficient way to simulate the skeleton of $Y(\cdot)$.
\end{remark}
\begin{prop}\label{prop:chf:ou:sncts}
The \lch\ $\psi_{Z_Y}(u, t)$, $u\in\R$ with $0<\alpha<1$ can be represented as:
\begin{equation}
\psi_{Z_Y}(u, t) = -\frac{c\,\beta^{\alpha}\,\Gamma(1-\alpha)}{2\,\alpha\,b}\left[J\left(u, \sigma, \alpha, \beta, \frac{\beta}{\omega}\right) + \log \omega\right],\quad \omega=e^{-2\,b\,t}
\label{eq:lch:ou:cts}
\end{equation}
with
\begin{eqnarray}
J\left(u, \sigma, \alpha, \beta_1, \beta_2\right) &=& \int_{\beta_1}^{\beta_2} z^{-1-\alpha}\left(z + \frac{\sigma^2}{2}\,u\right)^{\alpha}dz=\nonumber\\
&=&-\frac{1}{\alpha}\left[\left(\frac{\sigma^2\,u^2}{2\,\beta_2}\right)^{\alpha}\, _2F_1\left(-\alpha, -\alpha, 1-\alpha, -\frac{2\,\beta_2}{\sigma^2\,u^2}\right) \right.- \nonumber\\
&&\left.\left(\frac{\sigma^2\,u^2}{2\,\beta_1}\right)^{\alpha}\,_2F_1\left(-\alpha, -\alpha, 1-\alpha, -\frac{2\,\beta_1}{\sigma^2\,u^2}\right)  \right] 
\label{eq:J_W}
\end{eqnarray}
\noindent where $_2F_1(a, b, c, x)$ is the hypergeometric function, $0<\alpha<1$, $\beta_1>0$ and $\beta_2>\beta_1$.
Finally, taking $\alpha=1/2$ the hypergeometric function in\refeqq{eq:J_W} can be expressed in terms of elementary functions as
\begin{equation*}
_2F_1\left(-\frac{1}{2}, -\frac{1}{2}, \frac{1}{2}, -x\right) =\sqrt{x + 1} - \sqrt{x}\sinh^{-1}\left(\sqrt{x}\right) 
\end{equation*}
\noindent where $\sinh^{-1}x = \log\left(\sqrt{x^2 + 1} + x\right)$.
\end{prop}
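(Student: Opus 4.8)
The plan is to obtain $\psi_{Z_Y}(u,t)$ by integrating the \che\ of the driving symmetric NTS \Levy\ process along the OU exponential kernel, and then to reduce the resulting time integral to the tabulated hypergeometric primitive. Since $Z_Y(\cdot)$ stands to the \Levy\ process $Y(\cdot)$ exactly as $Z_L(\cdot)$ stands to $L(\cdot)$ in \refeqq{eq:sol:OU}, formula \refeqq{eq:lch:ou} gives at once $\psi_{Z_Y}(u,t)=\int_0^t\psi_Y\!\left(ue^{-bs}\right)ds$, with $\psi_Y$ the \che\ in \refeqq{eq:che:ncts} taken at $\theta=0$. (Alternatively one may use \refeqq{eq:zeta} and integrate the \che\ of the TS subordinator; both routes coincide.) First I would rewrite $\psi_Y$ more compactly: substituting $\beta=(1-\alpha)/\nu$ and $c\,\Gamma(1-\alpha)=\beta^{1-\alpha}$ into \refeqq{eq:che:ncts} turns it into $\psi_Y(u)=\frac{c\,\Gamma(1-\alpha)}{\alpha}\bigl[\beta^{\alpha}-\bigl(\beta+\tfrac{\sigma^{2}u^{2}}{2}\bigr)^{\alpha}\bigr]$.

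Next I would carry out the integral over $s\in[0,t]$. The constant part contributes $\int_0^t\beta^{\alpha}\,ds=\beta^{\alpha}t=-\tfrac{\beta^{\alpha}}{2b}\log\omega$, using $\omega=e^{-2bt}$. For the term $\int_0^t\bigl(\beta+\tfrac{\sigma^{2}u^{2}}{2}e^{-2bs}\bigr)^{\alpha}ds$ I would put $z=\beta e^{2bs}$, so that $ds=dz/(2bz)$, the limits become $z=\beta$ and $z=\beta/\omega$, and $\beta+\tfrac{\sigma^{2}u^{2}}{2}e^{-2bs}=\beta\,(z+\tfrac{\sigma^{2}u^{2}}{2})/z$; the integrand then collapses to $\tfrac{\beta^{\alpha}}{2b}\,z^{-1-\alpha}\bigl(z+\tfrac{\sigma^{2}u^{2}}{2}\bigr)^{\alpha}$, i.e.\ to $\tfrac{\beta^{\alpha}}{2b}\,J(u,\sigma,\alpha,\beta,\beta/\omega)$ with $J$ as in the first line of \refeqq{eq:J_W}. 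Collecting the two pieces and pulling out $\tfrac{c\,\Gamma(1-\alpha)\beta^{\alpha}}{2\alpha b}$ reproduces \refeqq{eq:lch:ou:cts}.

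It remains to evaluate $J(u,\sigma,\alpha,\beta_1,\beta_2)=\int_{\beta_1}^{\beta_2}z^{-1-\alpha}(z+A)^{\alpha}dz$ with $A=\tfrac{\sigma^{2}u^{2}}{2}$. I would expand $(z+A)^{\alpha}=A^{\alpha}\sum_{n\ge0}\binom{\alpha}{n}(z/A)^{n}$, integrate term by term, and then recognise the primitive: writing $\binom{\alpha}{n}=(-1)^{n}(-\alpha)_{n}/n!$ and using the telescoping identity $(-\alpha)_{n}/(1-\alpha)_{n}=-\alpha/(n-\alpha)$ one checks that $\Phi(z)=-\tfrac{1}{\alpha}\bigl(\tfrac{A}{z}\bigr)^{\alpha}{}_2F_1\!\bigl(-\alpha,-\alpha;1-\alpha;-z/A\bigr)$ is an antiderivative of $z^{-1-\alpha}(z+A)^{\alpha}$, so that $J=\Phi(\beta_2)-\Phi(\beta_1)$, which is the second line of \refeqq{eq:J_W}; equivalently one can simply cite the tabulated primitive of $x^{\mu-1}(1+\beta x)^{-\nu}$ with $\mu=\nu=-\alpha$. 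For $\alpha=\tfrac12$ I would either insert the classical elementary reduction of ${}_2F_1(-\tfrac12,-\tfrac12;\tfrac12;-x)$, or, to keep things self-contained, recompute $J$ directly through the substitution $w=\sqrt{1+A/z}$, which turns $\int z^{-3/2}(z+A)^{1/2}dz$ into $-\int\tfrac{2w^{2}}{w^{2}-1}\,dw$, an elementary rational integral that after back-substitution yields the stated $\sqrt{x+1}-\sqrt{x}\,\sinh^{-1}\!\sqrt{x}$.

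The changes of variables and the constant bookkeeping are routine; the delicate step is the hypergeometric evaluation of $J$. The binomial expansion converges only for $|z/A|<1$, whereas the actual integration range $[\beta,\beta/\omega]$ need not lie in that disc, so the identity $\Phi'(z)=z^{-1-\alpha}(z+A)^{\alpha}$ has to be propagated to all relevant $z>0$ by analytic continuation of ${}_2F_1$ (or, equivalently, by differentiating $\Phi$ outright via the derivative and a contiguous relation of the hypergeometric function), while keeping the branches of $z^{-\alpha}$ and $(\,\cdot\,)^{\alpha}$ consistent; the degenerate value $u=0$ (hence $A=0$) should be treated separately, where $J=0$ and $\psi_{Z_Y}(0,t)=0$ as it must.
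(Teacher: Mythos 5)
Your proof is correct, but it takes a more self-contained route than the paper. The paper's own proof is essentially a citation-plus-substitution argument: it invokes Proposition 3.2 of Sabino (2021) for the \lch\ $\rho(u,t)$ of $\tilde{Z}_L(t)=\int_0^t e^{-2b(t-s)}dL(s)$, which already contains the hypergeometric integral $I\left(u,\alpha,\beta,\beta/\omega\right)$, and then uses the mixture identity\refeqq{eq:zeta}, $\varphi_{Z_Y}(u,t)=\zeta\left(i\sigma^2u^2/2,\,t\right)$, so the stated formula follows by replacing $u$ with $i\sigma^2u^2/2$ in $\rho$. You instead work directly at the level of the SNTS characteristic exponent, integrating $\psi_Y\left(ue^{-bs}\right)$ over $[0,t]$ via\refeqq{eq:lch:ou}, using the change of variable $z=\beta e^{2bs}$ to produce $J$, and verifying the antiderivative $\Phi(z)=-\alpha^{-1}(A/z)^{\alpha}\,{}_2F_1\left(-\alpha,-\alpha;1-\alpha;-z/A\right)$ by hand; your telescoping identity $(-\alpha)_n/(1-\alpha)_n=-\alpha/(n-\alpha)$ is exactly what is needed, and your remarks on analytic continuation beyond $|z/A|<1$ and on the degenerate case $u=0$ address the only delicate points. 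The two routes are computationally equivalent --- the cited proposition performs the same kernel integral at the subordinator level, and you yourself note that integrating the TS characteristic exponent through\refeqq{eq:zeta} coincides with your computation --- but yours buys independence from the external reference (including a direct proof of the hypergeometric primitive and of the $\alpha=1/2$ elementary reduction via $w=\sqrt{1+A/z}$), while the paper's buys brevity and delegates the branch and continuation bookkeeping to the cited result. One incidental observation: your derivation yields the integrand $z^{-1-\alpha}\left(z+\sigma^2u^2/2\right)^{\alpha}$, consistent with the hypergeometric line of\refeqq{eq:J_W} and with the substitution in\refeqq{eq:zeta}; the factor $\frac{\sigma^2}{2}u$ displayed in the first line of\refeqq{eq:J_W} is evidently a typo for $\frac{\sigma^2u^2}{2}$.
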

\begin{proof}
Because of Proposition 3.2 in Sabino\mycite{Sabino21a}, the \lch\ $\rho(u, t)$ of $\tilde{Z}_L(t)= \int_0^te^{-2\,b\,(t-s)} L(s)$ is
\begin{equation}
\rho(u, t) = -\frac{c\,\beta^{\alpha}\,\Gamma(1-\alpha)}{2\,\alpha\,b}\left[I\left(u, \alpha, \beta, \frac{\beta}{\omega}\right) + \log \omega\right],\quad \omega=e^{-2\,b\,t}
\label{eq:lch:zeta:tilede}
\end{equation}
with
\begin{eqnarray}
I\left(u, \alpha, \beta_1, \beta_2\right) &=& \int_{\beta_1}^{\beta_2} z^{-1-\alpha}(z - iu)^{\alpha}dz=\nonumber\\
&=&-\frac{1}{\alpha}\left[\left(\frac{u}{i\,\beta_2}\right)^{\alpha}\, _2F_1\left(-\alpha, -\alpha, 1-\alpha, -\frac{i\,\beta_2}{u}\right) \right.- \nonumber\\
&=&\left.\left(\frac{u}{i\,\beta_1}\right)^{\alpha}\,_2F_1\left(-\alpha, -\alpha, 1-\alpha, -\frac{i\,\beta_1}{u}\right)  \right]. 
\end{eqnarray}
On the other hand, we know from\refeqq{eq:zeta} that $\varphi_{Z_Y}(u, t)=\zeta\left(\frac{i\,\sigma^2\,u^2}{2}, t\right)$ therefore, simply adapting this form to the \lch\ $\rho(u, t)$ concludes the proof.  
\end{proof}
\begin{remark}
Using the transformation 9.131.1 in Gradshteyn and Rizhik\mycite{gradshteyn2007} we have
\begin{equation*}
_2F_1(-\alpha, -\alpha, 1-\alpha, x) = (1-x)^{\alpha + 1}\,_2F_1(1, 1, 1-\alpha, x)
\end{equation*} 
and accordingly\refeqq{eq:J_W} becomes
\begin{eqnarray}
I\left(u, \alpha, \beta_1, \beta_2\right) 
&=&-\frac{i}{\alpha\,u}\left[\beta_2 ^{-\alpha}\left(\beta_2 - i\,u\right)^{\alpha+1} \, _2F_1\left(1, 1, 1-\alpha, -\frac{i\,\beta_2}{u}\right) \right.- \nonumber\\
&&\left.\beta_1 ^{-\alpha}\left(\beta_1 - i\,u\right)^{\alpha+1}\, _2F_1\left(1, 1, 1-\alpha, -\frac{i\,\beta_1}{u}\right) \right].
\label{eq:J_W_2}
\end{eqnarray}
We make use of this different representation in the following corollary.
\end{remark}
\begin{cor}\label{corr:mgf:ou:sncts}
The \cgf\ $m_{Z_Y}(s,t)= \ln\EXP{e^{s\,Z_Y(t)}}$ of $Z_Y(t)$ with $0<\alpha<1$ exists for $-\frac{\sqrt{2\,\beta}}{\sigma} < s < \frac{\sqrt{2\,\beta}}{\sigma}$ and is:
\begin{equation}\label{eq:mgf:ou:sncts}
 m_{Z_Y}(s,t) = -\frac{c\,\beta^{\alpha}\,\Gamma(1-\alpha)}{2\,\alpha\,b}\left[\tilde{J}\left(s, \sigma, \alpha, \beta, \frac{\beta}{\omega}\right) + \log \omega\right] 
\end{equation}
where
\begin{eqnarray}
\tilde{J}\left(s, \sigma, \alpha, \beta_1, \beta_2\right) &=& \int_{\beta_1}^{\beta_2} z^{-1-\alpha}\left(z - \frac{\sigma^2\,s^2}{2}\right)^{\alpha}dz=\nonumber\\
&&
\frac{2}{\alpha\,\sigma^2\,s^2}\left[\beta_2 ^{-\alpha}\left(\beta_2 - \frac{\sigma^2\,s^2}{2}\right)^{\alpha+1} \, _2F_1\left(1, 1, 1-\alpha, \frac{2\,\beta_2}{\sigma^2\,s^2}\right) \right.- \nonumber\\
&&\left.\beta_1 ^{-\alpha}\left(\beta_1 - \frac{\sigma^2\,s^2}{2}\right)^{\alpha+1}\, _2F_1\left(1, 1, 1-\alpha, -\frac{2\,\beta_1}{\sigma^2s^2}\right) \right]\label{eq:I_tilde}
\end{eqnarray}
\end{cor}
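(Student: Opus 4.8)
The plan is to reduce the statement to the \cgf\ of $\tilde Z_L(t)=\int_0^te^{-2b(t-s)}\,dL(s)$, exploiting the mixture structure already exhibited in\refeqq{eq:zeta}: there $\varphi_{Z_Y}(u,t)=\zeta\!\left(\tfrac{i\sigma^2u^2}{2},t\right)$ with $\zeta(\cdot,t)$ the \chf\ of $\tilde Z_L(t)$, so that formally $\EXP{e^{s\,Z_Y(t)}}=\varphi_{Z_Y}(-is,t)=\zeta\!\left(-\tfrac{i\sigma^2s^2}{2},t\right)=\EXP{e^{\frac{\sigma^2s^2}{2}\tilde Z_L(t)}}$ whenever the last expression is finite. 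Hence $m_{Z_Y}(s,t)$ equals the value at $u=-\tfrac{i\sigma^2s^2}{2}$ of the analytically continued \lch\ $\rho(u,t)$ of $\tilde Z_L(t)$ found in the proof of Proposition\myref{prop:chf:ou:sncts}, i.e. Proposition~3.2 in Sabino\mycite{Sabino21a}. First I would simply substitute this argument into $I(u,\alpha,\beta_1,\beta_2)=\int_{\beta_1}^{\beta_2}z^{-1-\alpha}(z-iu)^\alpha\,dz$, which turns $(z-iu)^\alpha$ into $\bigl(z-\tfrac{\sigma^2s^2}{2}\bigr)^\alpha$ and hence $I$ into $\tilde J(s,\sigma,\alpha,\beta_1,\beta_2)$; inserting this into\refeqq{eq:lch:zeta:tilede} with $\beta_1=\beta$, $\beta_2=\beta/\omega$ gives\refeqq{eq:mgf:ou:sncts} directly.

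Next I would pin down the domain of existence. By the analogue of\refeqq{eq:lch:ou} with reversion rate $2b$, the \cgf\ of $\tilde Z_L(t)$ at $w$ is $\int_0^t m_L(w\,e^{-2bs})\,ds$, where $m_L$ is the \cgf\ of $L(1)$; since the \Levy\ density\refeqq{eq:cts:density} decays like $e^{-\beta x}$, $m_L$ is finite up to the tempering rate $\beta$. As $w\,e^{-2bs}\le w$ for $w\ge0$, the exponential moment $\EXP{e^{w\tilde Z_L(t)}}$ is finite for $0\le w<\beta$, so $\EXP{e^{s\,Z_Y(t)}}<\infty$ exactly when $\tfrac{\sigma^2s^2}{2}<\beta$, i.e. for $-\tfrac{\sqrt{2\beta}}{\sigma}<s<\tfrac{\sqrt{2\beta}}{\sigma}$. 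This is also precisely the condition under which the whole integration range $[\beta,\beta/\omega]$ obeys $z>\tfrac{\sigma^2s^2}{2}$, so that $\bigl(z-\tfrac{\sigma^2s^2}{2}\bigr)^\alpha$ stays real and positive and the principal branch may be used unambiguously.

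For the closed form\refeqq{eq:I_tilde} I would avoid re-integrating and instead specialise the transformed representation\refeqq{eq:J_W_2} of the preceding Remark by setting $iu=\tfrac{\sigma^2s^2}{2}$ (equivalently $u=-\tfrac{i\sigma^2s^2}{2}$): the prefactor $-\tfrac{i}{\alpha u}$ becomes $\tfrac{2}{\alpha\sigma^2s^2}$, each factor $(\beta_j-iu)^{\alpha+1}$ becomes $\bigl(\beta_j-\tfrac{\sigma^2s^2}{2}\bigr)^{\alpha+1}$, and each hypergeometric argument $-\tfrac{i\beta_j}{u}$ becomes $\tfrac{2\beta_j}{\sigma^2s^2}$, which is exactly\refeqq{eq:I_tilde}; the $\alpha=1/2$ reduction of $_2F_1$ quoted in Proposition\myref{prop:chf:ou:sncts} carries over unchanged.

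The step I expect to be the main obstacle is making the analytic continuation rigorous. One has to check that $\rho(u,t)$, a priori holomorphic only in a neighbourhood of the real $u$-axis (where it is the \lch\ of a genuine \chf), extends holomorphically to a complex neighbourhood of the segment joining $0$ to $-\tfrac{i\sigma^2s^2}{2}$, that the integral representation of $I$ and the transformation\refeqq{eq:J_W_2} remain valid along that continuation, and that no branch cut of $z\mapsto(z-iu)^\alpha$ or of the hypergeometric functions is crossed; all of this follows once $\tfrac{\sigma^2s^2}{2}<\beta$ is in force (so $z\ge\beta>\tfrac{\sigma^2s^2}{2}$ on the integration range), and everything else is routine bookkeeping.
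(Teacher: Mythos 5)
Your proposal is correct and follows essentially the route the paper intends (the corollary is stated without an explicit proof, relying on the Remark's representation\refeqq{eq:J_W_2}): evaluate the \lch\ of Proposition\myref{prop:chf:ou:sncts} at $u=-\tfrac{i\sigma^2 s^2}{2}$, which you justify by the conditional-Gaussian identity $\EXP{e^{sZ_Y(t)}}=\EXP{e^{\frac{\sigma^2 s^2}{2}\tilde Z_L(t)}}$ together with the exponential-moment bound coming from the tempering rate $\beta$, exactly the condition $|s|<\sqrt{2\beta}/\sigma$. Your substitution also shows, correctly, that the second hypergeometric argument in\refeqq{eq:I_tilde} should be $+\tfrac{2\beta_1}{\sigma^2 s^2}$ (the minus sign there is a typo), and your remarks on the branch/continuation issues are the right details to supply.
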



\subsection{Simulation Algorithms}\label{sub:sec:ou:sncts:alg}
Proposition\myref{prop:rv:ou:sncts} provides the theoretical basis for the sequential generation of the skeleton of $N(\cdot)$ on a time grid $t_0, t_1, \dots t_M$, that, assuming at each step $a_m=e^{-b(t_{m}-t_{m-1})}, m=0, \dots, M$, can be accomplished by the following recursion with initial condition $N(t_0)=n_0$:
\begin{equation}
N(t_{m}) = a_m N(t_{m-1}) + Z_Y(t_{m-1}), \quad\qquad m=1,\dots, M,
\label{eq:ou:ncts:recursion}
\end{equation}
the implementation steps are also detailed in Algorithm\myref{alg:sim:ou:sncts}.
\begin{algorithm}
\caption{ }\label{alg:sim:ou:sncts}
\begin{algorithmic}[1]
\State  $N(t_0) \gets n_0$.
		\For{ $m=1, \dots, M$}
		\State $a_m=e^{-b(t_{m}-t_{m-1})}$
		\State $\omega_m=a_m^2 = e^{-2\,b(t_{m}-t_{m-1})}$
		\State $x\gets X\sim\norm(0, 1)$. 
		\State $m_1\gets M_1\sim\cts\!\left(\alpha,
\frac{\beta}{\omega_m}, c\,\frac{1 - \omega_m{\alpha}}{2\,\alpha\, b}\right)$
		\State $p\gets P_{\omega_m}\sim\poiss(\Lambda_{\omega_m})$,
        \Comment{Generate an independent Poisson \rv\ with $\Lambda_{\omega_m}$ in\refeqq{eq:ou:ncts:intensity}}
        \State $v_i\gets V_i, \quad i=1, \dots, p$\Comment{Generate independent \rv's with \pdf\refeqq{eq:ou:sncts:mixture:rate}}
        \State $\tilde{\beta}_i \gets \beta\, v_i, \quad i=1, \dots, p$
        \State $j_i\gets J_i\sim\gam(1-\alpha,\, \tilde{\beta}_i), \quad i=1, \dots, n$
        \Comment{Generate $p$ independent gamma \rv's all with the same scale $1-\alpha$ and random rates}
		\State $m_2\gets \sum_{i=1}^pj_i$ 
		\State $z_m\gets x\,\sqrt{m_1 + m_2}$
		\State $N(t_m)\gets a_i\,N(t_{m-1}) + z_m$.
		\EndFor
		\end{algorithmic}
\end{algorithm}

The generation of the rv's in Algorithm\myref{alg:sim:ou:sncts} is standard except that for the TS distributed \rv\ $M_1$ and for $V$. 
To this end, the sampling from a TS law has been widely studied by several
authors (see for instance Devroye\mycite{Dev2009} and
Hofert\mycite{Hofert2012}), of course, taking $\alpha=0.5$ the TS law coincides with an IG law hence, one can rely on the many-two-one transformation method of Michael et
al.\mycite{MSH76}. With regards to the simulation of $V$, an efficient algorithm is detailed in Cufaro Petroni and Sabino\mycite{cs20_3} which is based on the decomposition-rejection method illustrated in Devroye\mycite{Dev86} page 67.
Furthermore, we remark that using a Taylor expansion for $\Lambda_{\omega}$ we
find
\begin{equation*}
    \Lambda_\omega=\frac{c\Gamma(1-\alpha)b\,\beta^{\alpha}}{4}\,
\Delta\, t^2+o\big(\Delta\,t^2\big),
\end{equation*}
and therefore the compound Poisson $M_2$ has a relevant impact only
when $\Delta t$ is not too small. This observation gives then some hints on possible approximations. We will compare  the performance of our approach to two alternatives: the first simply ignores $M_2$ (Approximation 1), the second, in the same vein of Benth
et al.\mycite{BDPL18}, relies on the Euler scheme applied to the stochastic differential equation of an OU-SNTS process, namely it 
approximates the law of $Z_Y(\Delta t)$ with that of
$L_Y(\Delta t)\sim\sncts\left(\sigma, \alpha,\beta, c\,\Delta t\right)$ (Approximation 2) hence, in both cases $Z_Y(\Delta t)$ is approximated by a SNTS distributed \rv.

\subsection{Numerical Experiments}\label{sub:sec:ou:sncts:num}
In this section, we analyze the performance and the
effectiveness of the algorithms for the simulation of OU-SNTS processes. All the numerical studies in this paper
have been conducted using \emph{Python} with a $64$-bit Intel Core
i5-6300U CPU, 8GB. The performance of the algorithms is measured in
terms of the percentage error relatively to the second and fourth cumulants
denoted \emph{err} \% and defined as
\begin{equation*}
   \text{err \%} = \frac{\text{true value} - \text{estimated
value}}{\text{true value}}.
\end{equation*}
As a consequence of Equation\refeqq{eq:lch:ou} one can calculate the cumulants 
$c_{N,k}(n_0,t),\;k= 1,2,\ldots$ of $N(t)$ for $N_0=n_0$ from the
cumulants $c_{L_Y,k}$ of the symmetric NTS law according to
\begin{eqnarray}
c_{N,1}(n_0,t) &=&\EXP{N(t)|N_0=n_0}\;=\; n_0e^{-b\, t} + \frac{c_{L_Y,1}}{b}\left(1-e^{-b\, t}\right),\qquad k=1\label{eq:cumulants:ou1}\\
c_{N,k}(n_0,t) &=& \frac{c_{L_Y,k}}{k\,b}\left(1-e^{-k\,b\,
t}\right) \qquad\qquad\qquad\qquad\qquad
k=2,3,\ldots\label{eq:cumulants:ou2}
\end{eqnarray}
On the other hand, $c_{L_Y,k}=0$ when $k$ is an odd integer and setting $k=2\,n, n>0$ $c_{L_Y,2\,n}$ can be computed from the \Levy\ density as follows 
\begin{eqnarray}\label{eq:cts:cumulants}
c_{L_Y,2\,n} &=&\int_{-\infty}^{+\infty}x^{2\,n}\ell_{L_Y}(x)\,dx=\nonumber\\
&&C(\alpha, \nu, \sigma)\int_{-\infty}^{+\infty}x^{2\,n}|x|^{-\alpha -1/2} K_{\alpha+1/2}\left(A(\alpha, \nu, \sigma)\,|x|\right)\,dx=\nonumber\\
&& C(\alpha, \nu, \sigma) 2^{2\,n-\alpha-1/2}\,A(\alpha, \nu, \sigma)^{-2\,n-\alpha-1/2}\,\Gamma\left(\frac{2\,n+1}{2}\right)\,\Gamma\left(n-\alpha\right), 
\end{eqnarray}
where $K_{\eta}(x)$ represents the modified Bessel function of the second type, 
\begin{eqnarray*}
C(\alpha, \nu, \sigma) &=& \frac{2^{\alpha/2+5/4}}{\sqrt{2\,\pi}\,\Gamma(1-\alpha)}\sigma^{\alpha-1/2}\beta^{5/4-\alpha/2},\\
A(\alpha, \nu, \sigma) &=& \frac{\sqrt{2\,\beta}}{\sigma}
\end{eqnarray*}
and in the last step we have used 6.561.16 at page 676 in Gradshteyn and Rizhik\mycite{gradshteyn2007}. Of course, with $\alpha=1/2$ we retrieve the cumulants of a symmetric NIG law shown in Table 4.5 in Cont and Tankov\mycite{ContTankov2004}.

\begin{figure}
        \begin{subfigure}[c]{.5\textwidth}{
                \includegraphics[width=70mm]{./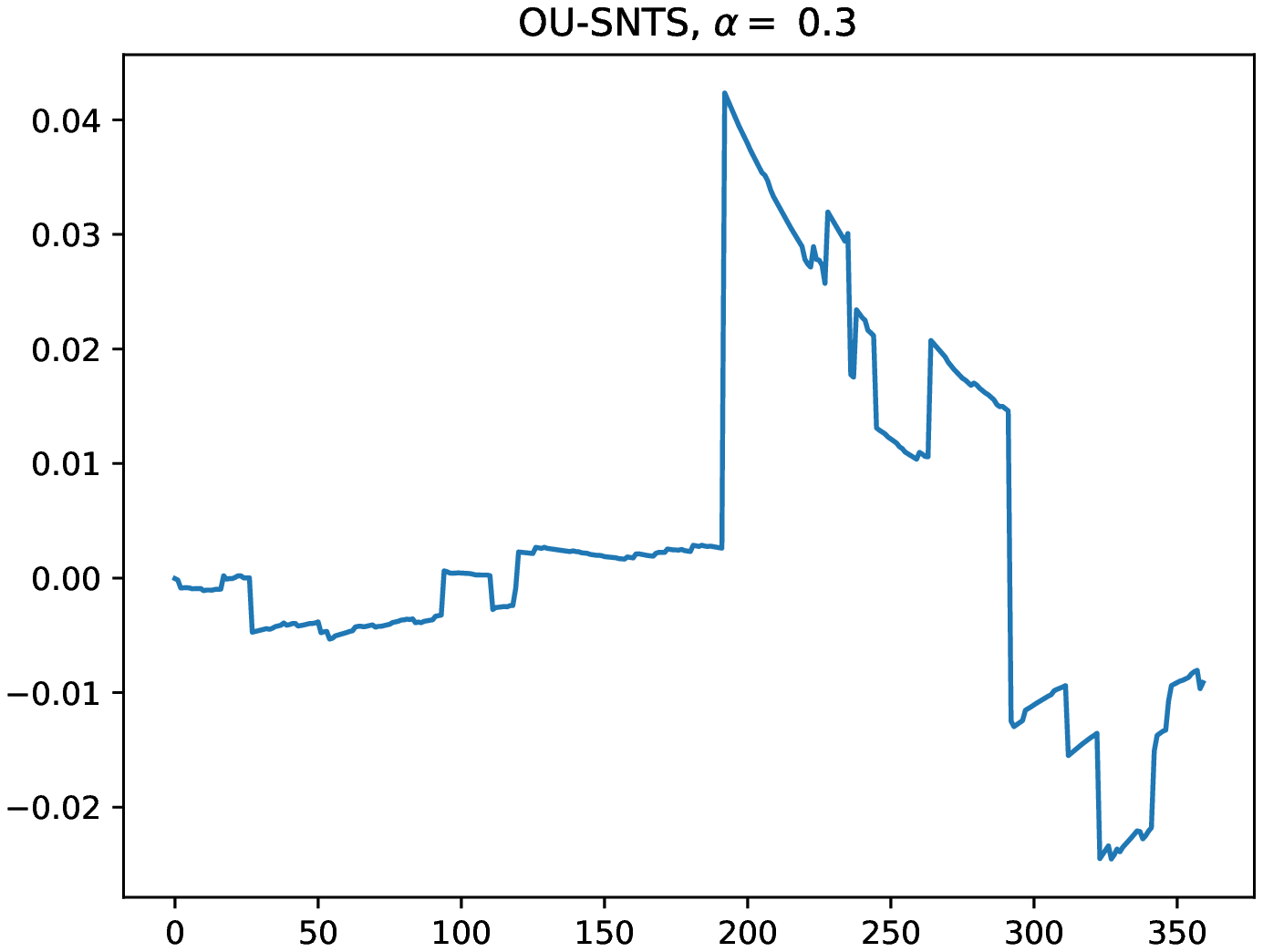}
                }
        \end{subfigure}
        \begin{subfigure}[c]{.5\textwidth}{
                \includegraphics[width=70mm]{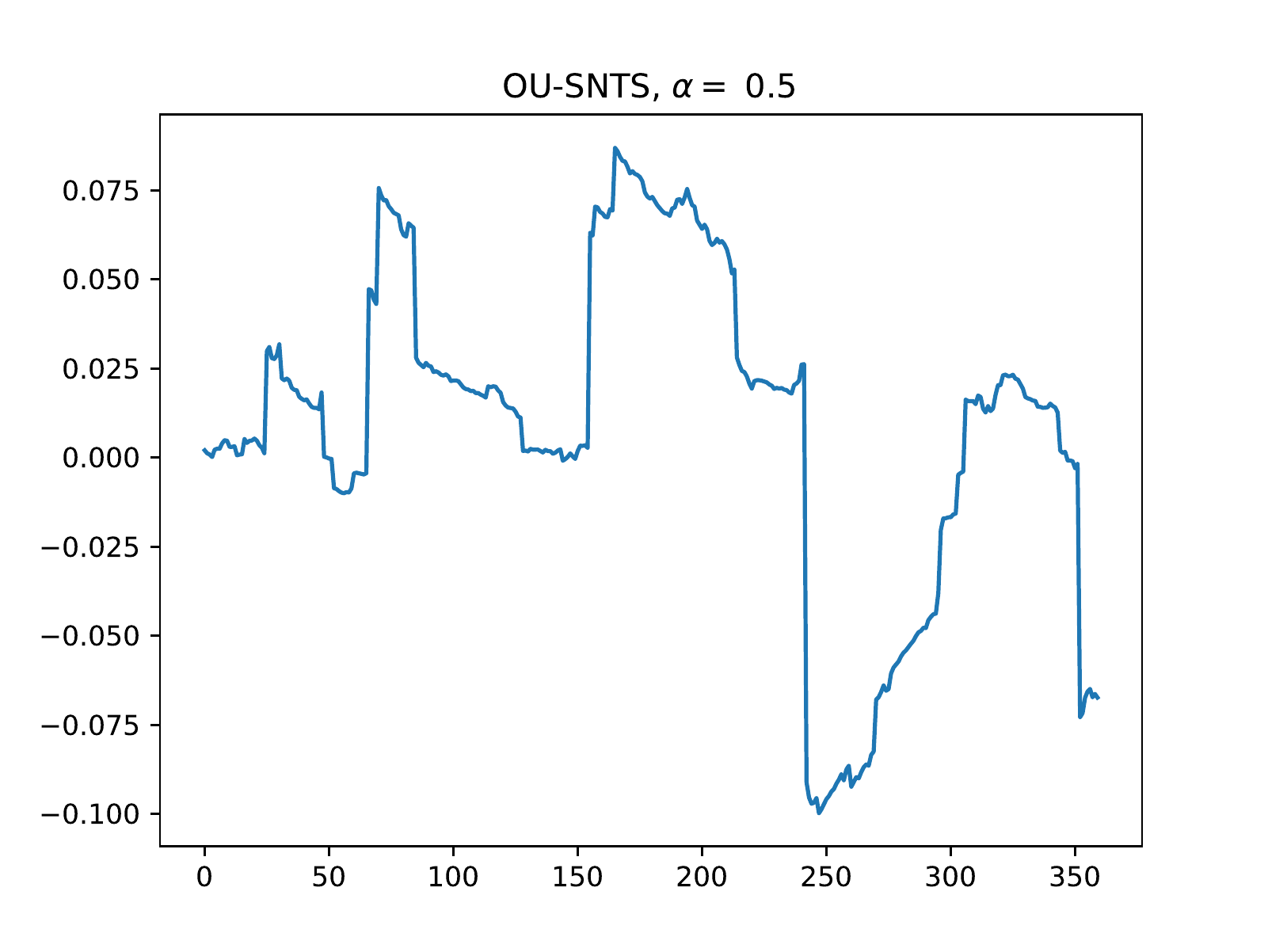}
                }
        \end{subfigure}
			\\
        \begin{subfigure}[c]{.5\textwidth}{
                \includegraphics[width=70mm]{./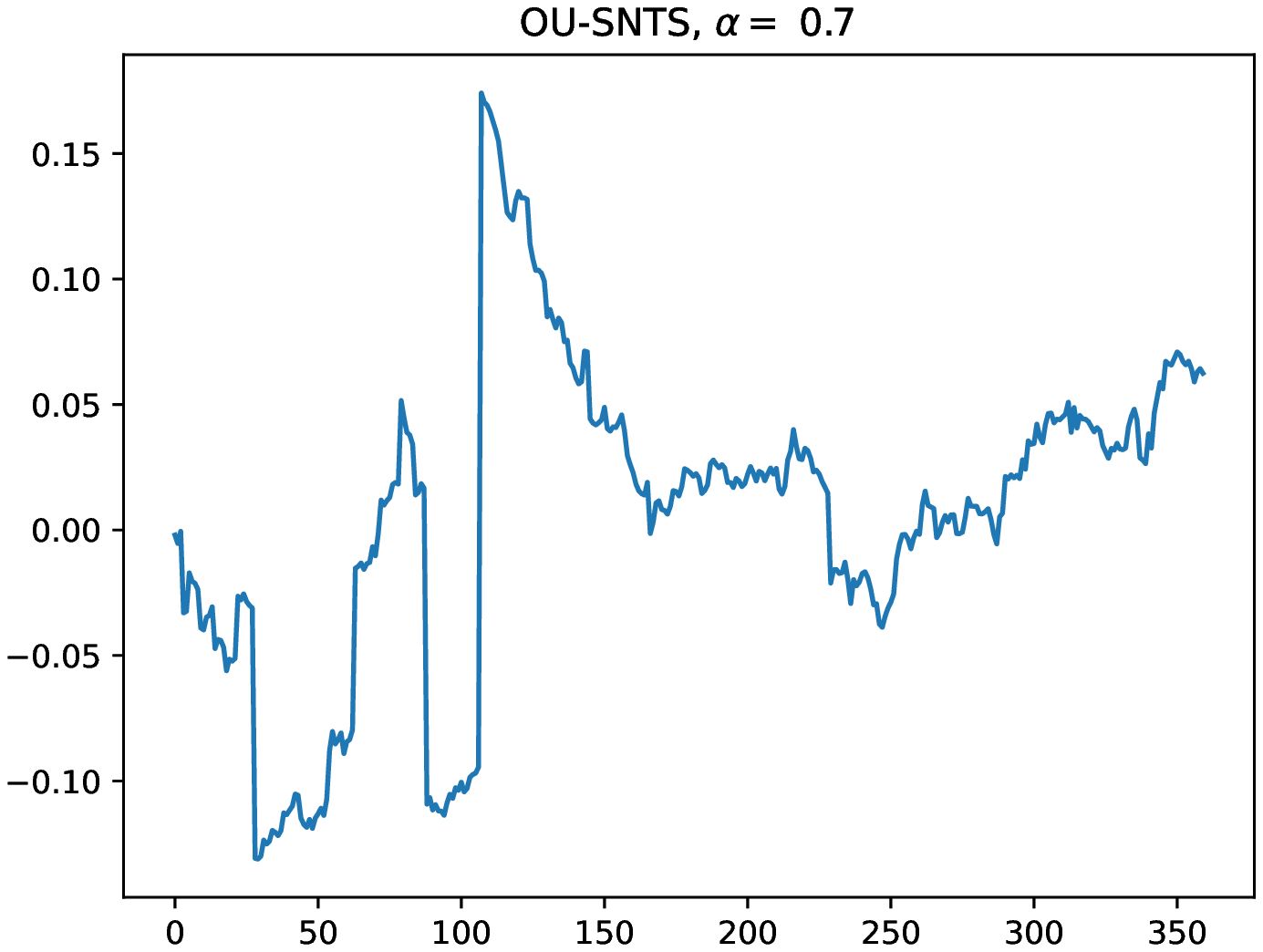}
                }
        \end{subfigure}
        \begin{subfigure}[c]{.5\textwidth}{
                \includegraphics[width=70mm]{./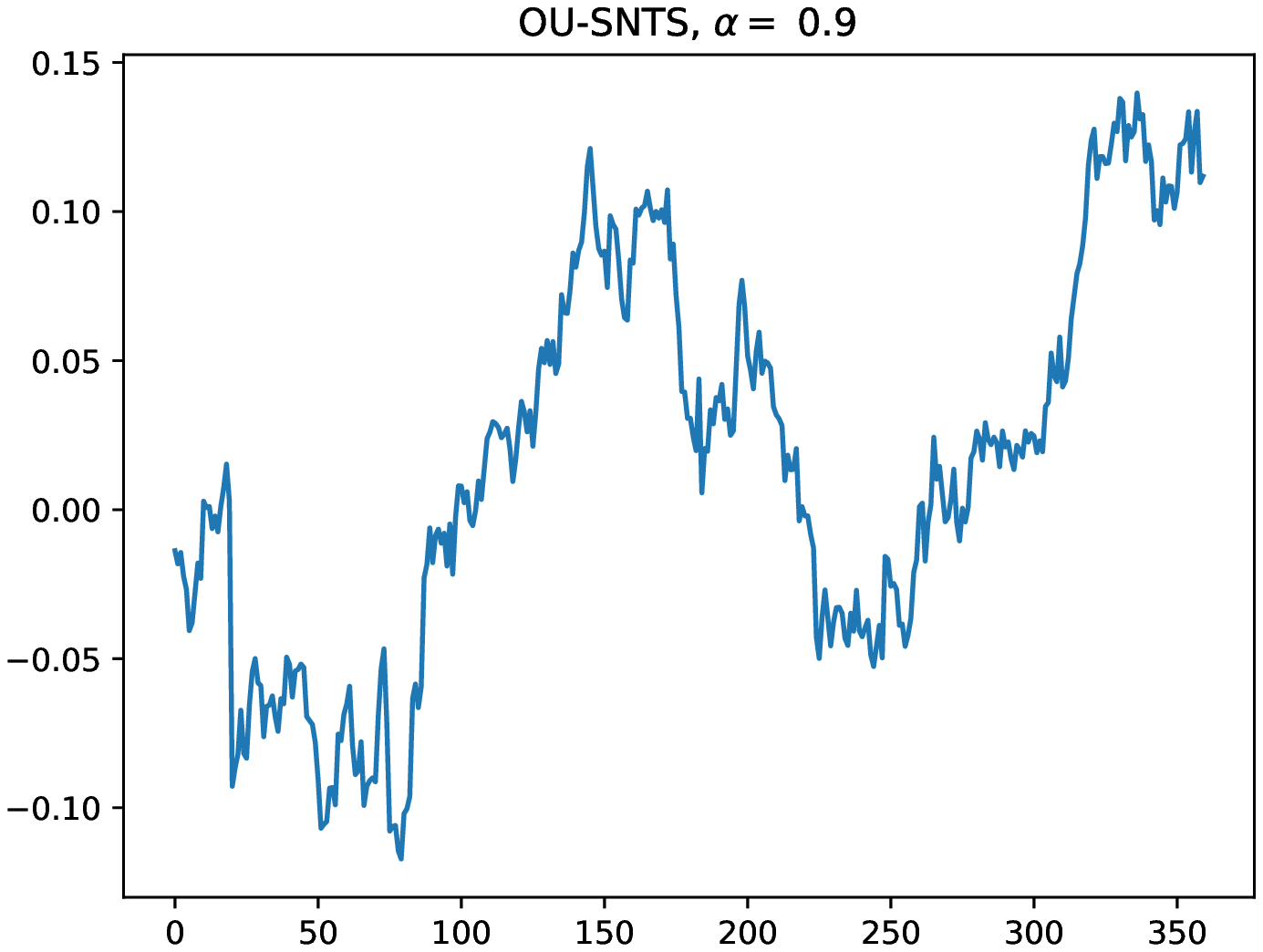}
                }
        \end{subfigure}
\caption{Sample trajectories of a OU-SNTS process with $\left(b, \sigma, \nu
\right) = \left(5, 0.3, 2.5\right)$.}\label{fig:ou:sncts}				
\end{figure}

Tables\myref{tab:exact:1:360} and\myref{tab:exact:1:12} show the comparison between the true values of $c_{N,2}(n_0,\Delta t)$ and $c_{N,4}(n_0,\Delta t)$ and their corresponding MC estimates. We have considered $n_0=0$, $\left(b, \sigma, \nu\right) = \left(5, 0.3, 2.5\right)$ with $\Delta t=1/365$ and $\Delta t=1/12$, respectively, and let $\alpha$ vary; similar conclusions have been derived with different parameter sets but are not reported for brevity. Moreover, Figure\myref{fig:ou:sncts} displays some path trajectories changing the parameter $\alpha$.

Irrespective to the values of $\Delta t$ and $\alpha$, one can observe that the convergence is relatively slow and a large number of simulation is required to get reliable estimates. Indeed, with the exact  scheme $10^5$ simulations are needed to achieve roughly five percent of percentage error both for the second and forth cumulant. 

Besides this fact, if $\Delta t$ equals $1/365$ (one day in financial applications) the approximated schemes and the exact scheme return similar estimates with little bias, although Approximation 1 seems better than Approximation 2. It is well known that the quality of the MC estimation deteriorates with the increase of the order of the cumulants, indeed in our numerical example we need $10^6$ scenarios to get a small percentage error.

In contrast, when $\Delta t=1/12$ - e.g. one month - one can observe in Table\myref{tab:exact:1:12} that the estimates of the approximated schemes are totally biased and no matter how high is the number of simulations, the MC estimates do not converge to the true values. On the other hand, the exact scheme performs well and exhibits a similar behavior to the one with a small $\Delta t$. 

These observations give a few hints on how to conceive the parameter calibration for a OU-SNTS process and provide guidance for the path generation. For instance, one can take a time series with relatively small frequency, e.g. daily, and estimate the parameters assuming that the transition density is that of SNTS according to Approximation 1. In particular for $\alpha=1/2$, the law is a symmetric NIG and some software packages based on the maximum-likelihood method are available to accomplish the parameter estimation\footnote{See for instance, the package \emph{GeneralizedHyperbolic} in R.}. On the other hand, if the skeleton of the process is to be generated on a not so fine time grid, the exact scheme is the only valid alternative. More important, the pricing of forward start options, for instance those contracts whose first settlement date is in one month, do not require the simulation of the process before the first settlement date, leading to a remarkable computational advantage even though the exact scheme relies on an additional step compared to the two approximated procedure. We will comment more of these facts in the following sections dedicated to the financial applications.

\begin{table}[ht!]
    \centering\scriptsize
        \resizebox{\textwidth}{!}{		\begin{tabular}{*{13}{|c|rr|rr|rr|rr|rr|rr}}
				\hline
				\multicolumn{13}{|c|}{$\alpha=0.1$}\\
				\hline
&	\multicolumn{6}{c}{$c_{N,2}(n_0=0, \Delta\,t)=2.466\times 10^{-4}$}			&			\multicolumn{6}{|c|}{$c_{N,2}c_{N,n}(n_0=0, \Delta\,t)=1.641\times 10^{-4}$}\\
\hline
& \multicolumn{2}{c}{\text{Exact}}	&	\multicolumn{2}{c|}{\text{Approximation 1}}	&	\multicolumn{2}{c}{\text{Approximation 2}}	&	\multicolumn{2}{|c|}{\text{Exact}} & \multicolumn{2}{c|}{\text{Approximation 1}} &	\multicolumn{2}{c|}{\text{Approximation 2}} \\ \hline	
$N$	& \text{MC} &	\text{err} \%	& \text{MC} &	\text{err} \%	 & \text{MC}	& \text{err} \%	& \text{MC} &	\text{err} \%	& \text{MC}	& \text{err} \%	& \text{MC}	& \text{err} \%\\
\hline
$1000$ & $2.972$ & $-20.5\%$ & $3.524$ & $-42.9\%$ & $1.344$ & $45.5\%$ & $0.785$ & $52.2\%$ & $0.914$ & $44.3\%$ & $0.935$ & $43.0\%$\\
$10000$ & $2.043$ & $17.2\%$ & $3.268$ & $-32.5\%$ & $3.313$ & $-34.3\%$ & $0.796$ & $51.5\%$ & $0.888$ & $45.9\%$ & $0.918$ & $44.1\%$\\
$20000$ & $2.233$ & $9.4\%$ & $2.048$ & $17.0\%$ & $1.813$ & $26.5\%$ & $1.284$ & $21.8\%$ & $1.908$ & $-16.3\%$ & $2.076$ & $-26.5\%$\\
$50000$ & $2.577$ & $-4.5\%$ & $2.109$ & $14.5\%$ & $2.788$ & $-13.1\%$ & $1.384$ & $15.7\%$ & $1.464$ & $10.8\%$ & $2.032$ & $-23.8\%$\\
$100000$ & $2.503$ & $-1.5\%$ & $2.320$ & $5.9\%$ & $2.580$ & $-4.6\%$ & $1.464$ & $10.8\%$ & $1.494$ & $9.0\%$ & $1.821$ & $-11.0\%$\\
$1000000$ & $2.483$ & $-0.7\%$ & $2.515$ & $-2.0\%$ & $2.418$ & $2.0\%$ & $1.684$ & $-2.6\%$ & $1.550$ & $5.5\%$ & $1.532$ & $6.6\%$\\
\hline
				\multicolumn{13}{|c|}{$\alpha=0.3$}\\
				\hline
&	\multicolumn{6}{c}{$c_{N,2}(n_0=0, \Delta\,t)=2.466\times 10^{-4}$}			&			\multicolumn{6}{|c|}{$c_{N,2}c_{N,n}(n_0=0, \Delta\,t)=1.641\times 10^{-4}$}\\
\hline
& \multicolumn{2}{c}{\text{Exact}}	&	\multicolumn{2}{c|}{\text{Approximation 1}}	&	\multicolumn{2}{c}{\text{Approximation 2}}	&	\multicolumn{2}{|c|}{\text{Exact}} & \multicolumn{2}{c|}{\text{Approximation 1}} &	\multicolumn{2}{c|}{\text{Approximation 2}} \\ \hline	
$N$	& \text{MC} &	\text{err} \%	& \text{MC} &	\text{err} \%	 & \text{MC}	& \text{err} \%	& \text{MC} &	\text{err} \%	& \text{MC}	& \text{err} \%	& \text{MC}	& \text{err} \%\\		
\hline						
$1000$ & $0.196$ & $92.1\%$ & $1.009$ & $59.1\%$ & $1.249$ & $49.4\%$ & $0.665$ & $59.4\%$ & $0.187$ & $88.6\%$ & $0.890$ & $45.7\%$\\
$10000$ & $2.915$ & $-18.2\%$ & $3.299$ & $-33.8\%$ & $3.299$ & $-33.8\%$ & $0.957$ & $41.7\%$ & $2.261$ & $-37.8\%$ & $0.957$ & $41.7\%$\\
$20000$ & $2.717$ & $-10.2\%$ & $2.006$ & $18.6\%$ & $3.120$ & $-26.5\%$ & $1.969$ & $-20.0\%$ & $1.980$ & $-20.7\%$ & $2.030$ & $-23.7\%$\\
$50000$ & $2.293$ & $7.0\%$ & $2.171$ & $12.0\%$ & $2.096$ & $15.0\%$ & $1.380$ & $15.9\%$ & $1.361$ & $17.1\%$ & $1.892$ & $-15.3\%$\\
$100000$ & $2.341$ & $5.1\%$ & $2.619$ & $-6.2\%$ & $2.239$ & $9.2\%$ & $1.519$ & $7.4\%$ & $1.871$ & $-14.0\%$ & $1.815$ & $-10.6\%$\\
$1000000$ & $2.424$ & $1.7\%$ & $2.513$ & $-1.9\%$ & $2.384$ & $3.3\%$ & $1.603$ & $2.3\%$ & $1.607$ & $2.1\%$ & $1.515$ & $7.7\%$\\
\hline
				\multicolumn{13}{|c|}{$\alpha=0.5$}\\
				\hline
&	\multicolumn{6}{c}{$c_{N,2}(n_0=0, \Delta\,t)=2.466\times 10^{-4}$}			&			\multicolumn{6}{|c|}{$c_{N,2}c_{N,n}(n_0=0, \Delta\,t)=1.641\times 10^{-4}$}\\
\hline
& \multicolumn{2}{c}{\text{Exact}}	&	\multicolumn{2}{c|}{\text{Approximation 1}}	&	\multicolumn{2}{c}{\text{Approximation 2}}	&	\multicolumn{2}{|c|}{\text{Exact}} & \multicolumn{2}{c|}{\text{Approximation 1}} &	\multicolumn{2}{c|}{\text{Approximation 2}} \\ \hline	
$N$	& \text{MC} &	\text{err} \%	& \text{MC} &	\text{err} \%	 & \text{MC}	& \text{err} \%	& \text{MC} &	\text{err} \%	& \text{MC}	& \text{err} \%	& \text{MC}	& \text{err} \%\\		
\hline	
$1000$ & $0.514$ & $79.1\%$ & $4.215$ & $-70.9\%$ & $1.437$ & $41.7\%$ & $0.617$ & $62.4\%$ & $0.110$ & $93.3\%$ & $2.991$ & $-82.3\%$\\
$10000$ & $1.552$ & $37.1\%$ & $1.275$ & $48.3\%$ & $3.332$ & $-35.1\%$ & $2.265$ & $-38.0\%$ & $2.381$ & $-45.1\%$ & $2.389$ & $-45.6\%$\\
$20000$ & $2.185$ & $11.4\%$ & $3.117$ & $-26.4\%$ & $3.032$ & $-22.9\%$ & $1.265$ & $22.9\%$ & $1.216$ & $25.9\%$ & $2.030$ & $-23.7\%$\\
$50000$ & $2.293$ & $7.0\%$ & $2.171$ & $12.0\%$ & $2.171$ & $12.0\%$ & $1.915$ & $-16.7\%$ & $1.892$ & $-15.3\%$ & $1.890$ & $-15.2\%$\\
$100000$ & $2.341$ & $5.1\%$ & $2.247$ & $8.9\%$ & $2.247$ & $8.9\%$ & $1.761$ & $-7.3\%$ & $1.471$ & $10.3\%$ & $1.510$ & $8.0\%$\\
$1000000$ & $2.513$ & $-1.9\%$ & $2.531$ & $-2.6\%$ & $2.350$ & $4.7\%$ & $1.698$ & $-3.5\%$ & $1.699$ & $-3.5\%$ & $1.516$ & $7.6\%$\\
\hline
				\multicolumn{13}{|c|}{$\alpha=0.3$}\\
				\hline
&	\multicolumn{6}{c}{$c_{N,2}(n_0=0, \Delta\,t)=2.466\times 10^{-4}$}			&			\multicolumn{6}{|c|}{$c_{N,2}c_{N,n}(n_0=0, \Delta\,t)=1.641\times 10^{-4}$}\\
\hline
& \multicolumn{2}{c}{\text{Exact}}	&	\multicolumn{2}{c|}{\text{Approximation 1}}	&	\multicolumn{2}{c}{\text{Approximation 2}}	&	\multicolumn{2}{|c|}{\text{Exact}} & \multicolumn{2}{c|}{\text{Approximation 1}} &	\multicolumn{2}{c|}{\text{Approximation 2}} \\ \hline	
$N$	& \text{MC} &	\text{err} \%	& \text{MC} &	\text{err} \%	 & \text{MC}	& \text{err} \%	& \text{MC} &	\text{err} \%	& \text{MC}	& \text{err} \%	& \text{MC}	& \text{err} \%\\		
\hline	
$1000$ & $0.923$ & $62.6\%$ & $1.264$ & $48.7\%$ & $1.276$ & $48.3\%$ & $0.010$ & $99.4\%$ & $0.532$ & $67.6\%$ & $0.377$ & $77.0\%$\\
$10000$ & $1.557$ & $36.9\%$ & $1.910$ & $22.5\%$ & $1.792$ & $27.3\%$ & $0.816$ & $50.3\%$ & $2.389$ & $-45.6\%$ & $2.168$ & $-32.1\%$\\
$20000$ & $2.185$ & $11.4\%$ & $1.881$ & $23.7\%$ & $1.850$ & $25.0\%$ & $1.620$ & $1.3\%$ & $1.222$ & $25.5\%$ & $1.317$ & $19.8\%$\\
$50000$ & $2.342$ & $5.0\%$ & $2.006$ & $18.6\%$ & $2.085$ & $15.5\%$ & $1.333$ & $18.8\%$ & $1.322$ & $19.4\%$ & $1.377$ & $16.1\%$\\
$100000$ & $2.559$ & $-3.8\%$ & $2.227$ & $9.7\%$ & $2.714$ & $-10.1\%$ & $1.560$ & $5.0\%$ & $1.652$ & $-0.7\%$ & $1.749$ & $-6.6\%$\\
$1000000$ & $2.440$ & $1.0\%$ & $2.407$ & $2.4\%$ & $2.375$ & $3.7\%$ & $1.639$ & $0.1\%$ & $1.654$ & $-0.8\%$ & $1.627$ & $0.9\%$\\
\hline
				\multicolumn{13}{|c|}{$\alpha=0.3$}\\
				\hline
&	\multicolumn{6}{c}{$c_{N,2}(n_0=0, \Delta\,t)=2.466\times 10^{-4}$}			&			\multicolumn{6}{|c|}{$c_{N,2}c_{N,n}(n_0=0, \Delta\,t)=1.641\times 10^{-4}$}\\
\hline
& \multicolumn{2}{c}{\text{Exact}}	&	\multicolumn{2}{c|}{\text{Approximation 1}}	&	\multicolumn{2}{c}{\text{Approximation 2}}	&	\multicolumn{2}{|c|}{\text{Exact}} & \multicolumn{2}{c|}{\text{Approximation 1}} &	\multicolumn{2}{c|}{\text{Approximation 2}} \\ \hline	
$N$	& \text{MC} &	\text{err} \%	& \text{MC} &	\text{err} \%	 & \text{MC}	& \text{err} \%	& \text{MC} &	\text{err} \%	& \text{MC}	& \text{err} \%	& \text{MC}	& \text{err} \%\\		
\hline	
$1000$ & $1.276$ & $48.3\%$ & $1.412$ & $42.7\%$ & $1.451$ & $41.2\%$ & $0.902$ & $45.1\%$ & $0.909$ & $44.6\%$ & $0.190$ & $88.4\%$\\
$10000$ & $2.006$ & $18.6\%$ & $2.170$ & $12.0\%$ & $1.805$ & $26.8\%$ & $2.190$ & $-33.5\%$ & $0.506$ & $69.2\%$ & $2.381$ & $-45.1\%$\\
$20000$ & $2.233$ & $9.4\%$ & $2.089$ & $15.3\%$ & $1.905$ & $22.8\%$ & $2.061$ & $-25.6\%$ & $2.244$ & $-36.8\%$ & $1.835$ & $-11.8\%$\\
$50000$ & $2.288$ & $7.2\%$ & $2.694$ & $-9.2\%$ & $2.066$ & $16.2\%$ & $1.835$ & $-11.8\%$ & $1.844$ & $-12.4\%$ & $1.279$ & $22.1\%$\\
$100000$ & $2.350$ & $4.7\%$ & $2.398$ & $2.8\%$ & $2.216$ & $10.1\%$ & $1.730$ & $-5.4\%$ & $1.744$ & $-6.3\%$ & $1.491$ & $9.2\%$\\
$1000000$ & $2.502$ & $-1.5\%$ & $2.519$ & $-2.1\%$ & $2.566$ & $-4.0\%$ & $1.699$ & $-3.5\%$ & $1.652$ & $-0.6\%$ & $1.601$ & $2.4\%$\\
\hline
		\end{tabular}
		}
	\caption{MC-estimated cumulants multiplied by $10000$ and comparison to their true values. $\left(b, \sigma, \nu
\right) = \left(5, 0.3, 2.5\right)$, $\Delta t=1/365$.}
	\label{tab:exact:1:360}
\end{table}
\clearpage
\begin{table}[ht]
    \centering\scriptsize
        \resizebox{\textwidth}{!}{		\begin{tabular}[ht]{*{13}{|c|rr|rr|rr|rr|rr|rr}}
				\hline
				\multicolumn{13}{|c|}{$\alpha=0.1$}\\
				\hline
&	\multicolumn{6}{c}{$c_{N,2}(n_0=0, \Delta\,t)=5.089\times 10^{-3}$}			&			\multicolumn{6}{|c|}{$c_{N,2}c_{N,n}(n_0=0, \Delta\,t)=2.461\times 10^{-3}$}\\
\hline
& \multicolumn{2}{c}{\text{Exact}}	&	\multicolumn{2}{c|}{\text{Approximation 1}}	&	\multicolumn{2}{c}{\text{Approximation 2}}	&	\multicolumn{2}{|c|}{\text{Exact}} & \multicolumn{2}{c|}{\text{Approximation 1}} &	\multicolumn{2}{c|}{\text{Approximation 2}} \\ \hline	
$N$	& \text{MC} &	\text{err} \%	& \text{MC} &	\text{err} \%	 & \text{MC}	& \text{err} \%	& \text{MC} &	\text{err} \%	& \text{MC}	& \text{err} \%	& \text{MC}	& \text{err} \%\\
\hline
$1000$ & $6.154$ & $-20.9\%$ & $2.350$ & $53.8\%$ & $2.702$ & $46.9\%$ & $2.951$ & $-19.9\%$ & $0.216$ & $91.2\%$ & $0.370$ & $85.0\%$\\
$10000$ & $5.156$ & $-1.3\%$ & $3.659$ & $28.1\%$ & $2.780$ & $45.4\%$ & $2.832$ & $-15.1\%$ & $1.018$ & $58.7\%$ & $0.501$ & $79.6\%$\\
$20000$ & $5.067$ & $0.4\%$ & $3.491$ & $31.4\%$ & $2.845$ & $44.1\%$ & $2.743$ & $-11.5\%$ & $0.939$ & $61.8\%$ & $0.564$ & $77.1\%$\\
$50000$ & $5.073$ & $0.3\%$ & $3.464$ & $31.9\%$ & $3.158$ & $37.9\%$ & $2.616$ & $-6.3\%$ & $0.937$ & $61.9\%$ & $0.842$ & $65.8\%$\\
$100000$ & $5.080$ & $0.2\%$ & $3.544$ & $30.4\%$ & $3.179$ & $37.5\%$ & $2.520$ & $-2.4\%$ & $1.026$ & $58.3\%$ & $0.873$ & $64.5\%$\\
$1000000$ & $5.083$ & $0.1\%$ & $3.419$ & $32.8\%$ & $3.266$ & $35.8\%$ & $2.426$ & $1.4\%$ & $0.962$ & $60.9\%$ & $1.010$ & $59.0\%$\\
\hline
				\multicolumn{13}{|c|}{$\alpha=0.3$}\\
				\hline
&	\multicolumn{6}{c}{$c_{N,2}(n_0=0, \Delta\,t)=5.089\times 10^{-3}$}			&			\multicolumn{6}{|c|}{$c_{N,2}c_{N,n}(n_0=0, \Delta\,t)=2.461\times 10^{-3}$}\\
\hline
& \multicolumn{2}{c}{\text{Exact}}	&	\multicolumn{2}{c|}{\text{Approximation 1}}	&	\multicolumn{2}{c}{\text{Approximation 2}}	&	\multicolumn{2}{|c|}{\text{Exact}} & \multicolumn{2}{c|}{\text{Approximation 1}} &	\multicolumn{2}{c|}{\text{Approximation 2}} \\ \hline	
$N$	& \text{MC} &	\text{err} \%	& \text{MC} &	\text{err} \%	 & \text{MC}	& \text{err} \%	& \text{MC} &	\text{err} \%	& \text{MC}	& \text{err} \%	& \text{MC}	& \text{err} \%\\		
\hline						
$1000$ & $5.723$ & $-12.5\%$ & $4.456$ & $12.4\%$ & $4.472$ & $12.1\%$ & $2.183$ & $11.3\%$ & $1.240$ & $49.6\%$ & $1.874$ & $23.9\%$\\
$10000$ & $5.827$ & $-14.5\%$ & $3.703$ & $27.2\%$ & $3.169$ & $37.7\%$ & $2.742$ & $-11.4\%$ & $0.826$ & $66.4\%$ & $0.659$ & $73.2\%$\\
$20000$ & $5.480$ & $-7.7\%$ & $3.824$ & $24.9\%$ & $3.275$ & $35.6\%$ & $2.249$ & $8.6\%$ & $0.953$ & $61.3\%$ & $1.171$ & $52.4\%$\\
$50000$ & $5.348$ & $-5.1\%$ & $3.624$ & $28.8\%$ & $3.274$ & $35.7\%$ & $2.590$ & $-5.2\%$ & $0.943$ & $61.7\%$ & $0.928$ & $62.3\%$\\
$100000$ & $5.050$ & $0.8\%$ & $3.668$ & $27.9\%$ & $3.495$ & $31.3\%$ & $2.529$ & $-2.8\%$ & $1.083$ & $56.0\%$ & $1.046$ & $57.5\%$\\
$1000000$ & $5.076$ & $0.2\%$ & $3.686$ & $27.6\%$ & $3.247$ & $36.2\%$ & $2.450$ & $0.5\%$ & $1.071$ & $56.5\%$ & $0.945$ & $61.6\%$\\
\hline
				\multicolumn{13}{|c|}{$\alpha=0.5$}\\
				\hline
&	\multicolumn{6}{c}{$c_{N,2}(n_0=0, \Delta\,t)=5.089\times 10^{-3}$}			&			\multicolumn{6}{|c|}{$c_{N,2}c_{N,n}(n_0=0, \Delta\,t)=2.461\times 10^{-3}$}\\
\hline
& \multicolumn{2}{c}{\text{Exact}}	&	\multicolumn{2}{c|}{\text{Approximation 1}}	&	\multicolumn{2}{c}{\text{Approximation 2}}	&	\multicolumn{2}{|c|}{\text{Exact}} & \multicolumn{2}{c|}{\text{Approximation 1}} &	\multicolumn{2}{c|}{\text{Approximation 2}} \\ \hline	
$N$	& \text{MC} &	\text{err} \%	& \text{MC} &	\text{err} \%	 & \text{MC}	& \text{err} \%	& \text{MC} &	\text{err} \%	& \text{MC}	& \text{err} \%	& \text{MC}	& \text{err} \%\\		
\hline	
$1000$ & $5.710$ & $-12.2\%$ & $5.651$ & $-11.0\%$ & $2.320$ & $54.4\%$ & $2.742$ & $-11.4\%$ & $6.369$ & $-158.8\%$ & $0.177$ & $92.8\%$\\
$10000$ & $5.633$ & $-10.7\%$ & $4.585$ & $9.9\%$ & $3.445$ & $32.3\%$ & $3.073$ & $-24.8\%$ & $1.954$ & $20.6\%$ & $1.388$ & $43.6\%$\\
$20000$ & $5.430$ & $-6.7\%$ & $4.393$ & $13.7\%$ & $3.635$ & $28.6\%$ & $2.678$ & $-8.8\%$ & $1.478$ & $39.9\%$ & $1.353$ & $45.0\%$\\
$50000$ & $5.260$ & $-3.4\%$ & $4.075$ & $19.9\%$ & $3.350$ & $34.2\%$ & $2.616$ & $-6.3\%$ & $1.128$ & $54.2\%$ & $1.038$ & $57.8\%$\\
$100000$ & $5.050$ & $0.8\%$ & $4.034$ & $20.7\%$ & $3.400$ & $33.2\%$ & $2.367$ & $3.8\%$ & $1.055$ & $57.1\%$ & $1.125$ & $54.3\%$\\
$1000000$ & $5.104$ & $-0.3\%$ & $4.043$ & $20.6\%$ & $3.245$ & $36.2\%$ & $2.467$ & $-0.3\%$ & $1.124$ & $54.3\%$ & $0.979$ & $60.2\%$\\
\hline
				\multicolumn{13}{|c|}{$\alpha=0.3$}\\
				\hline
&	\multicolumn{6}{c}{$c_{N,2}(n_0=0, \Delta\,t)=5.089\times 10^{-3}$}			&			\multicolumn{6}{|c|}{$c_{N,2}c_{N,n}(n_0=0, \Delta\,t)=2.461\times 10^{-3}$}\\
\hline
& \multicolumn{2}{c}{\text{Exact}}	&	\multicolumn{2}{c|}{\text{Approximation 1}}	&	\multicolumn{2}{c}{\text{Approximation 2}}	&	\multicolumn{2}{|c|}{\text{Exact}} & \multicolumn{2}{c|}{\text{Approximation 1}} &	\multicolumn{2}{c|}{\text{Approximation 2}} \\ \hline	
$N$	& \text{MC} &	\text{err} \%	& \text{MC} &	\text{err} \%	 & \text{MC}	& \text{err} \%	& \text{MC} &	\text{err} \%	& \text{MC}	& \text{err} \%	& \text{MC}	& \text{err} \%\\		
\hline	
$1000$ & $5.545$ & $-9.0\%$ & $4.008$ & $21.2\%$ & $3.118$ & $38.7\%$ & $2.008$ & $18.4\%$ & $0.268$ & $89.1\%$ & $0.290$ & $88.2\%$\\
$10000$ & $4.702$ & $7.6\%$ & $4.576$ & $10.1\%$ & $3.322$ & $34.7\%$ & $2.742$ & $-11.4\%$ & $0.930$ & $62.2\%$ & $0.776$ & $68.4\%$\\
$20000$ & $5.421$ & $-6.5\%$ & $4.471$ & $12.1\%$ & $3.255$ & $36.0\%$ & $2.195$ & $10.8\%$ & $0.797$ & $67.6\%$ & $0.703$ & $71.5\%$\\
$50000$ & $5.329$ & $-4.7\%$ & $4.472$ & $12.1\%$ & $3.179$ & $37.5\%$ & $2.671$ & $-8.5\%$ & $1.431$ & $41.9\%$ & $0.663$ & $73.1\%$\\
$100000$ & $5.050$ & $0.8\%$ & $4.477$ & $12.0\%$ & $3.208$ & $37.0\%$ & $2.513$ & $-2.1\%$ & $1.270$ & $48.4\%$ & $0.875$ & $64.4\%$\\
$1000000$ & $5.099$ & $-0.2\%$ & $4.370$ & $14.1\%$ & $3.266$ & $35.8\%$ & $2.485$ & $-1.0\%$ & $1.144$ & $53.5\%$ & $1.010$ & $59.0\%$\\

\hline
				\multicolumn{13}{|c|}{$\alpha=0.3$}\\
				\hline
&	\multicolumn{6}{c}{$c_{N,2}(n_0=0, \Delta\,t)=5.089\times 10^{-3}$}			&			\multicolumn{6}{|c|}{$c_{N,2}c_{N,n}(n_0=0, \Delta\,t)=2.461\times 10^{-3}$}\\
\hline
& \multicolumn{2}{c}{\text{Exact}}	&	\multicolumn{2}{c|}{\text{Approximation 1}}	&	\multicolumn{2}{c}{\text{Approximation 2}}	&	\multicolumn{2}{|c|}{\text{Exact}} & \multicolumn{2}{c|}{\text{Approximation 1}} &	\multicolumn{2}{c|}{\text{Approximation 2}} \\ \hline	
$N$	& \text{MC} &	\text{err} \%	& \text{MC} &	\text{err} \%	 & \text{MC}	& \text{err} \%	& \text{MC} &	\text{err} \%	& \text{MC}	& \text{err} \%	& \text{MC}	& \text{err} \%\\		
\hline	
$1000$ & $4.659$ & $8.4\%$ & $3.891$ & $23.6\%$ & $2.061$ & $59.5\%$ & $2.942$ & $-19.5\%$ & $0.182$ & $92.6\%$ & $0.002$ & $99.9\%$\\
$10000$ & $4.975$ & $2.2\%$ & $4.191$ & $17.6\%$ & $2.909$ & $42.8\%$ & $2.833$ & $-15.1\%$ & $0.222$ & $91.0\%$ & $0.400$ & $83.8\%$\\
$20000$ & $4.848$ & $4.7\%$ & $4.651$ & $8.6\%$ & $3.014$ & $40.8\%$ & $2.678$ & $-8.8\%$ & $0.876$ & $64.4\%$ & $0.385$ & $84.4\%$\\
$50000$ & $4.970$ & $2.3\%$ & $4.773$ & $6.2\%$ & $3.102$ & $39.0\%$ & $2.616$ & $-6.3\%$ & $1.208$ & $50.9\%$ & $0.431$ & $82.5\%$\\
$100000$ & $5.022$ & $1.3\%$ & $4.718$ & $7.3\%$ & $3.259$ & $36.0\%$ & $2.367$ & $3.8\%$ & $1.013$ & $58.8\%$ & $0.706$ & $71.3\%$\\
$1000000$ & $5.101$ & $-0.2\%$ & $4.843$ & $4.8\%$ & $3.282$ & $35.5\%$ & $2.419$ & $1.7\%$ & $1.271$ & $48.3\%$ & $1.108$ & $55.0\%$\\

\hline
		\end{tabular}
		}
	\caption{MC-estimated cumulants multiplied by $1000$ and comparison to their true values. $\left(b, \sigma, \nu
\right) = \left(5, 0.3, 2.5\right)$, $\Delta t=1/12$.}
	\label{tab:exact:1:12}
\end{table}
\clearpage

\section{Financial Applications}\label{sec:fin:app}
The day-ahead (also called spot) price of power or gas and in general of commodities exhibit mean-reversion, seasonality and spikes, this last feature is particularly difficult to be modeled in a pure Gaussian world. Different approaches have been investigated in order to somehow extend the classical Gaussian framework introduced in Lucia and Schwarz\mycite{LS02} and Schwartz and Smith\mycite{SchwSchm00}. Among others, Cartea and Figueroa\mycite{CarteaFigueroa}, Meyer-Brandis and Tankov\mycite{MBT2008} and Sabino  and Cufaro Petroni \mycite{cs20_2} have studied mean-reverting jump-diffusions to model sudden spikes, whereas Cummins et al.\mycite{CKM17, CKM18}, and Sabino\mycite{Sabino21a, Sabino20b} have considered variance gamma and in general tempered stable and CGMY processes to price power or gas derivative contracts. 

In this section we consider a setting similar to Benth and Benth\mycite{BenthBenth04} and Benth et al.\mycite{BKM07} where the evolution of the spot dynamics is driven by a OU-SNTS process and eventually by a second factor. Based on the results presented in Section\myref{sec:ou:ncts}, compared to Benth and Benth\mycite{BenthBenth04} and Benth et al.\mycite{BKM07} we provide non-arbitrage conditions with no numerical approximations and then we illustrate the effectiveness our findings in the pricing energy derivatives, namely a strip of call options using the FFT-based technique of Carr and Madan\mycite{Carr1999OptionVU}, a forward start Asian option with MC simulations and a swing option with the Least-Squares Monte Carlo (LSMC) presented in Hambly et al.\mycite{HHM11}. 
Finally, we describe how these findings can be adapted to forward markets as well in order to capture the Samuelson effect in spirit of the works of Benth et al.\mycite{BPV19}, Latini et al.\mycite{LPV19} or Piccirilli et al.\mycite{PSV20}.

\subsection{Call Options}\label{sec:fin:app:call}

Consider a spot price dynamics driven by the following one factor process
	\begin{equation}\label{eq:market}
		S(t) = F(0,t)\, e^{h(t) + N(t)}
	\end{equation}
	where $h(t)$ is a deterministic function, $F(0,t)$ is the forward curve derived from quoted products and reflects the seasonality, whereas $N(t)$ is the OU-SNTS process in\refeqq{eq:sol:OUNCTS_N}; as already mentioned, this market model basically coincides with that discussed in Benth and Benth\mycite{BenthBenth04} and Benth et al.\mycite{BKM07} when $\alpha=1/2$. 

Using  Lemma 3.1 in Hambly
et al.\mycite{HHM11},  the risk-neutral conditions are met when the deterministic function $h(t)$ is consistent with forward curve such that
\begin{equation}\label{eq:rn:spot}
	h(t) = -m_N(1, t)
\end{equation}
where $m_N(s, t)$ is the \cgf\ of $N(t)$, $m_N(s, t) = s\,N(0)e^{-bt} + m_{Z_Y}(s, t)$ and $m_{Z_Y}(s, t)$ is given by\refeqq{eq:mgf:ou:sncts} therefore, assuming for simplicity $N(0)=0$, we have
\begin{eqnarray}\label{eq:rn:ou:ncts}
	h(t) &=& \frac{c\,\beta^{\alpha}\,\Gamma(1-\alpha)}{\alpha\,b}\left[\tilde{J}\left(1, \sigma, \alpha, \beta,\frac{\beta}{\omega}\right) + \log \omega\right]
\end{eqnarray}
with $\frac{\sqrt{2\,\beta}}{\sigma}>1$ or $\sqrt{\frac{2\,(1-\alpha)}{\sigma^2\,\nu}}>1$.
Taking $\alpha=1/2$, the integrals can be written in terms of the logarithmic function, indeed
\begin{eqnarray*}
\tilde{J}\left(1, \frac{1}{2}, \beta_1, \beta_2\right)&=&  \int_{\beta_1}^{\beta_2} z^{-\frac{3}{2}}\left(z - \frac{\sigma^2}{2}\right)^{\frac{1}{2}}dz = 
\int_{\frac{2\,\beta_1}{\sigma^2}}^{\frac{2\,\beta_2}{\sigma^2}} y^{-\frac{3}{2}}\left(y - 1\right)^ {\frac{1}{2}}dy
\end{eqnarray*}
where we used the change of variable $z=\frac{\sigma^2\,y}{2}$, in addition using equation (35) in Sabino\mycite{Sabino21a} we have
\begin{eqnarray}
\tilde{J}\left(1, \frac{1}{2}, \beta_1, \beta_2\right)&=&
2\,\left(\log \left(\sqrt{\frac{2\,\beta_2}{\sigma^2}} + \sqrt{\frac{2\,\beta_2}{\sigma^2} - 1}\right) - 
\sqrt{\frac{2\,\beta_2 - \sigma^2}{2\,\beta_2}}\right) - \nonumber\\
&&2\,\left(\log \left(\sqrt{\frac{2\,\beta_1}{\sigma^2}} + \sqrt{\frac{2\,\beta_1}{\sigma^2} - 1}\right) - 
\sqrt{\frac{2\,\beta_1 - \sigma^2}{2\,\beta_1}}\right)
\label{eq:rn:integrals_p_0.5}.
\end{eqnarray} 

We consider a daily strip of $M$ call options with maturity $T$ and strike $K$ namely, a contract with payoff
\begin{equation*}
C(K, T) = \sum_{m=1}^M(S(t_m)-K)^+=\sum_{m=1}^M c_m(K, t_m), \quad t_1, t_2, \dots t_M=T.
\end{equation*}
Such a contract normally encompasses monthly, quarterly and yearly maturities, but is not very liquid and  is generally offered by brokers.

As a consequence of Proposition\myref{prop:chf:ou:sncts}, we can make use of the explicit form of the \chf\ of the of $\log\,S(t)=\log F(0,t) + h(t) + X(t)$ to compute the price of such a strip of calls using the FFT-based technique of Carr and Madan\mycite{Carr1999OptionVU}.

For this specific example we consider a set with realistic parameters,  $(b, \nu, \sigma) = (10, 0.7, 0.2)$, similar to that of Cummins et al.\mycite{CKM17} and Gardini et al.\mycite{Gardini20b} and let $\alpha$ vary. In addition, in order to better highlight the dependency on the maturity and $\alpha$, we take a flat forward curve $F(0,t) = 20$, indeed the seasonality changes the moneyness of the strip of call options and could shadow some features of the price model;  therefore we also assume an at-the-money strike $K=20$. 

Table\myref{tab:call} shows the numerical results relatively to different $\alpha$'s and different maturities spanning from one month to one year. Table\myref{tab:call} also compares the results obtained with the FFT method and with the MC method with $10^5$ simulations. As far as this last method is concerned, we also report the standard errors defined as the sample standard deviation divided by the square root of the relative number of simulations. 

As expected, we observe that the FFT and the MC approaches return comparable results, of course the FFT method is known to be faster. Moreover, as also mentioned in Subsection\myref{sub:sec:ou:sncts:alg}, except for $\alpha=0.5$, which corresponds to an IG law, the simulation relies on the acceptance rejection method and is thus more computationally expensive.
Moreover, the call options prices exhibit a non-negligible monotonic and increasing dependence on the stability parameter $\alpha$. Taking the values for $\alpha=0.5$ (OU-NIG) as the benchmark, given the used set of remaining parameters, the difference is of circa plus or minus fifteen percent. On the other hand, Figure\myref{fig:ou:ncts:call:strike} shows that the price difference is non-negligible for almost at-the-money options, whereas it disappears for deep out and in-the-money options. Besides these observations, an important fact is visible in Figure\myref{fig:ou:ncts:call:maturity} which shows how each term of the strip depends on the time-to-maturity for an at the money-option: apparently, the time-to-maturity does not to play a relevant role for values above one month. The cause of this evident drawback is the fact that we are using a one-factor model and we remark that we can identify this issue because we have switched the seasonality off. Indeed, even though we are using a lower value for the mean reversion level $b$ than those presented in other studies (see for instance once again Benth and Benth\mycite{BenthBenth04}, Benth et al.\mycite{BKM07}), such a value is large enough to ``tail off'' the long term variability, hence a second factor is required for the pricing of contracts with long maturities. In the following section, we will illustrate the calibration and application of a two-factor model to the pricing of a forward start Asian option and a swing option. Of course, the FFT and MC methods discussed in this section can be easily adapted to a two-factor price dynamics as well.

\begin{table}[ht!]
    \centering
        \resizebox{\textwidth}{!}{
	\begin{tabular}{*{11}{|c|c|c|c|c|c|c|c|c|c|c}}
				\hline
				 & \multicolumn{2}{|c|}{$\alpha=0.1$} & \multicolumn{2}{|c}{$\alpha=0.3$} & \multicolumn{2}{|c|}{$\alpha=0.5$} & \multicolumn{2}{|c|}{$\alpha=0.7$} & \multicolumn{2}{|c|}{$\alpha=0.9$}\\
				\hline
				& FFT & MC & FFT & MC & FFT & MC & FFT & MC & FFT & MC
				\\
				\hline
$T = 1/12$ & $3.3259$ & $3.281 \pm 0.0335$ & $3.8392$ & $3.7348 \pm 0.034$ & $4.5342$ & $4.3131 \pm 0.0334$ & $5.3861$ & $5.1903 \pm 0.0324$ & $6.5152$ & $6.31 \pm 0.0319$\\
$T = 3/12$ & $16.481$ & $16.294 \pm 0.104$ & $18.017$ & $18.241 \pm 0.106$ & $19.905$ & $19.569 \pm 0.105$ & $22.184$ & $22.115 \pm 0.102$ & $25.308$ & $25.328 \pm 0.098$\\
$T = 1/2$ & $38.078$ & $37.5626 \pm 0.1738$ & $40.881$ & $40.405 \pm 0.179$ & $44.305$ & $43.853 \pm 0.177$ & $48.515$ & $48.691 \pm 0.173$ & $54.445$ & $54.868 \pm 0.166$\\
$T = 8/12$ & $59.749$ & $59.069 \pm 0.2242$ & $63.799$ & $64.553 \pm 0.231$ & $68.745$ & $68.227 \pm 0.229$ & $74.878$ & $75.248 \pm 0.224$ & $83.606$ & $84.387 \pm 0.216$\\
$T = 1$ & $81.421$ & $80.9306 \pm 0.2655$ & $86.716$ & $86.762 \pm 0.274$ & $93.186$ & $92.575 \pm 0.271$ & $101.24$ & $101.79 \pm 0.27$ & $112.77$ & $112.94 \pm 0.26$\\
\hline
		\end{tabular}
		}
	\caption{Strip of at-the money call option prices obtained with the FFT and the MC method with $10^5$ simulations under the assumption that $\left(b, \sigma, \nu
\right) = \left(10, 0.2, 0.7\right)$.}
	\label{tab:call}
\end{table}				

\begin{figure}
        \begin{subfigure}[c]{.5\textwidth}{
                \includegraphics[width=70mm]{./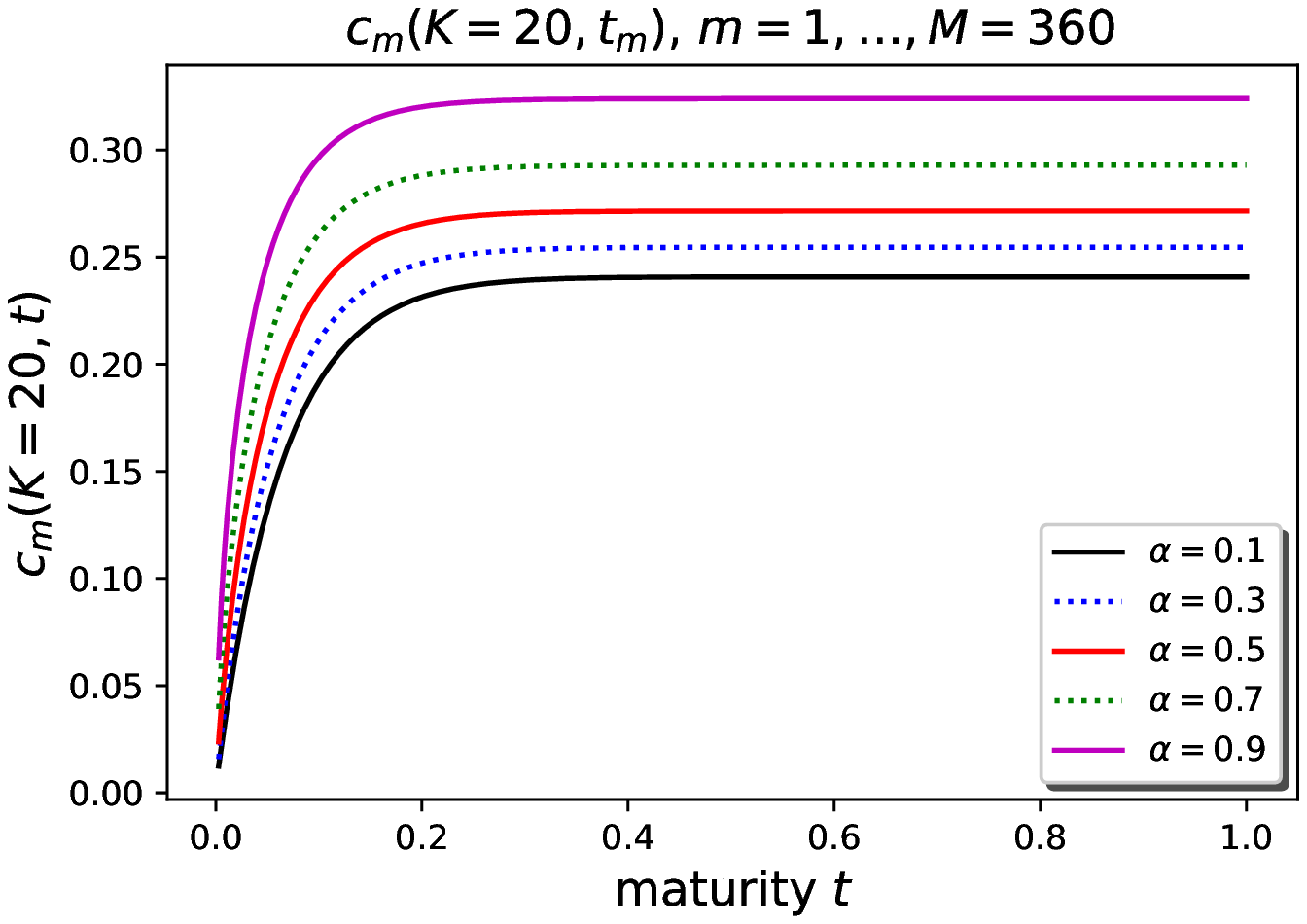}
                }
				\caption{$c_m(K, t_m), m=1, \dots, 360$}\label{fig:ou:ncts:call:maturity}								
        \end{subfigure}
        \begin{subfigure}[c]{.5\textwidth}{
                \includegraphics[width=70mm]{./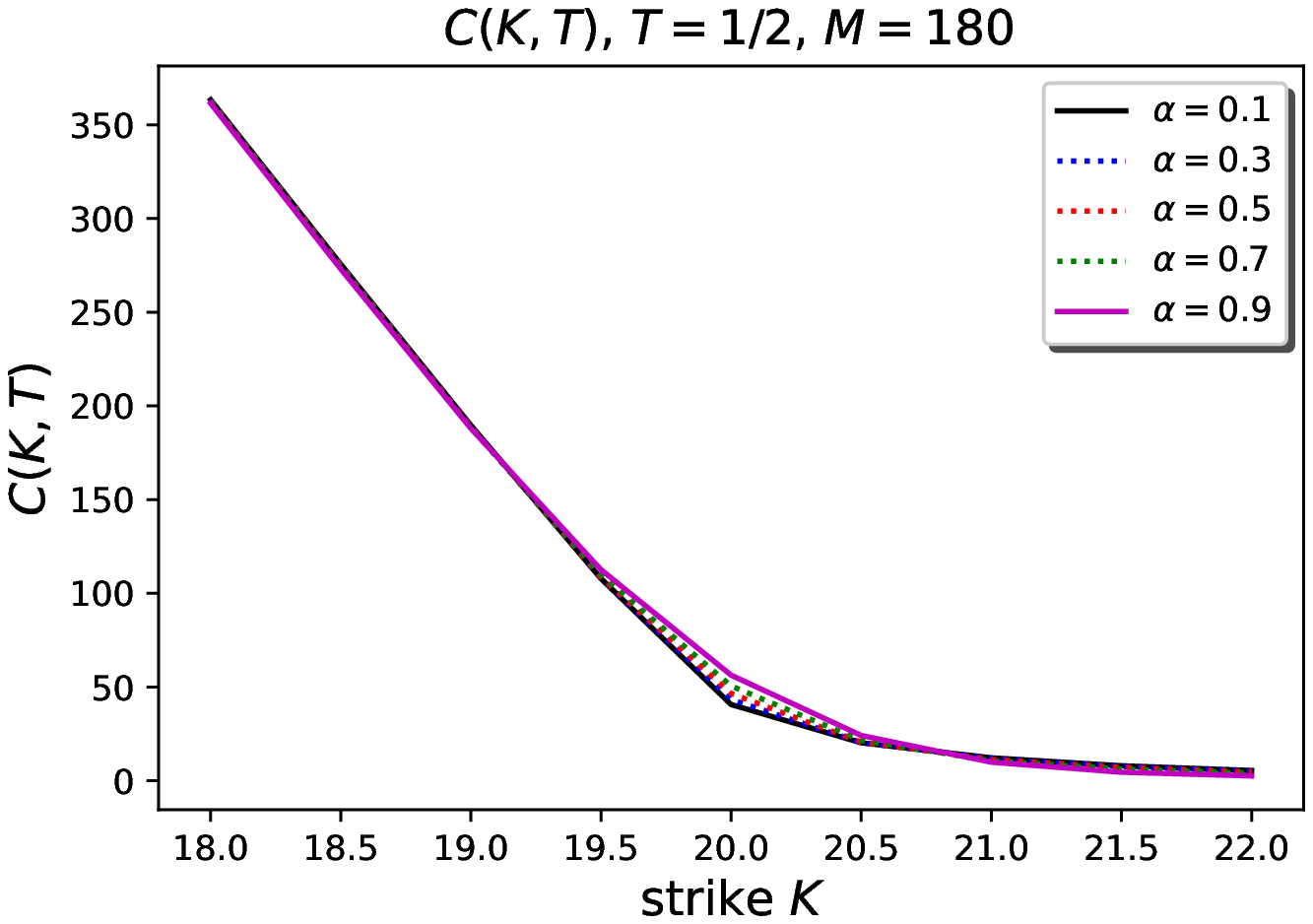}
                }
				\caption{Effect on the strike}\label{fig:ou:ncts:call:strike}
        \end{subfigure}
\label{fig:ou:call}				
\end{figure}

\subsection{A Two-factor Model}\label{sec:fin:app:two:factors}
The previously defined one-factor model\refeqq{eq:market} can easily be turned into a two-factor one, adding a second  process, for instance a NIG, a Variance Gamma or any another NTS process,  obtaining a normal tempered stable version of the two factors Gaussian model of Schwartz and Smith\mycite{SchwSchm00}. For the reasons illustrated in the previous section, such a model is of course more realistic, the two factors are often associated by practitioners to a short and a long term variability. We consider the following  evolution of the spot prices
\begin{equation}\label{eq:market:two:factors}
		S(t) = F(0,t)\, e^{h_1(t) + h_2(t) + N_1(t) + N_2(t)}
	\end{equation}
	where $h_1(t)$, $N_1(\cdot)$ and $F(0,t)$ play the same role of as the terms in\refeqq{eq:market}, whereas $N_2(\cdot)$ is an independent NTS process of the type of Equation\refeqq{eq:sub:bm}, and $h_2(t)$, similarly to $h_1(t)$,  is a deterministic function related to the \che\ of $N_2(\cdot)$ given by\refeqq{eq:che:ncts}. We denote  $(\alpha_1, \sigma_1, \nu_1, \theta_1, b)$ and $(\alpha_2, \sigma_2, \nu_2, \theta_2)$ the parameters of the two processes respectively. Further, we adopt the point of view of de Jong and Schneider\mycite{DJS09} and assume that the process $N_2(\cdot)$ is related to the month-ahead forward price $F_{MA}(t)$ at time $t$ obtained rolling the front-month forward once expired. Writing then 
\begin{equation}
F_{MA}(t) = F_{MA}(0) e^{h_2(t) + N_2(t)}
\label{eq:front:month}
\end{equation}
has the advantage to relate the process $N_2(\cdot)$ to observable prices and at the same time to simplify the parameter estimation of\refeqq{eq:market:two:factors}. 

The unknown parameters $(\sigma_2, \nu_2, \theta_2)$ can be estimated with the generalized method of moments or with simulation based approaches because, except for $\alpha=0.5$ (the NIG process), the transition density of $N_2(\cdot)$ is quite complicated and one cannot always rely on the maximum-likelihood method. In the following, we simplify the framework and take $\alpha_1=\alpha_2=0.5$, namely, we assume that $N_1(\cdot)$ is a OU-NIG process, whereas $N_2(\cdot)$ is a plain NIG process. Such a simplification allows us to accomplish the parameter estimation with the maximum likelihood method implemented by the package \emph{GeneralizedHyperbolic} in R. Nevertheless, in the following section, we will illustrate the sensitivity of the Asian option value to different parameters $\alpha$. We consider the historical data of the day-ahead and month-ahead forward prices of the German gas market NCG in the four years time window from January 1st 2016 to December 31st 2019 for a total of $1461$ values.

After calibrating the parameters of the process $N_2(\cdot)$ we proceed with the estimation of those of the process $N_1(\cdot)$, referred as a short term factor, removing the  seasonality component from the day-ahead time-series at first, using a decomposition with a double-cosine plus a linear trend (see Figure\myref{fig:ncg:detrending}\footnote{We use the package \emph{STL} in R}).  
Taking then the logarithm of each value of the deseasonalized time-series of the day-ahead and of the month-ahead NCG prices denoted $s_k$ and $f_k$, $k=1, \dots, 1461$, respectively, we have
\begin{equation}
s_{k+1} - f_{k+1} = (s_k - f_k)\,e^{-b\,\Delta t} + \epsilon_{k+1}
\label{eq:ou:calibration}
\end{equation}
where $\Delta t = 1/365$ (one day) and $\epsilon_{k}, k=1, \dots$ is distributed according to the law of $Z_Y(\Delta t)$ of\refeqq{eq:sol:OUNCTS_Z} when $Y(\cdot)$ is a NIG process. From the sample statistics of the residual it turns out that their distribution is symmetric, thus can rely on the modeling via a symmetric OU-NIG process and we can neglect the parameter $\theta_1$. We remark that due to Proposition\myref{prop:rv:ou:sncts}, $\epsilon_{k}, k=1, \dots$  is not a NIG distributed random sequence but rather the convolution of a NIG law plus a compound Poisson law. On the other hand, because of the discussion in Section\myref{sub:sec:ou:sncts:num}, it is reasonable to neglect the latter contribution if  $\Delta t$ is small, as it is in our case, so that one can accomplish the parameter estimation with the maximum-likelihood method. However, as observed once again in Section\myref{sub:sec:ou:sncts:num}, Approximation 1  is more reliable than the plain Euler scheme, and therefore, we adopt such a choice.
For sake of completeness, the time-series of the month-ahead forward contract is available on trading days only, whereas that of the day-ahead is available every day. In order to overcome this inconsistency, we create fictitious values for the month-ahead time-series on non-trading days by a random sampling from a NIG law with the estimated parameters $(\hat{\sigma}_2, \hat{\nu}_2, \hat{\theta}_2)$.

Table\myref{tab:parameters} reports the results of the estimation procedure, whereas Figures\myref{fig:ncg:ma} and\myref{fig:ncg:da} show that the NIG distribution fit the data of both the month-ahead and the short term component of the day-ahead series quite well. Nevertheless, the approximation that we have adopted in the calibration of the parameters of the latter factor is not always reliable in the context of the pricing of energy derivatives especially if one deals with forward start contracts. To this end, in order to avoid to introduce an intrinsic bias, one must implement the simulation of the skeleton of the OU-NIG component (and OU-NTS) with the exact simulation scheme or at least select such a procedure for those time intervals that are larger than one day. In the following application, we adopt such a strategy for the pricing of a forward start Asian option and a swing option. 

\begin{table}[ht!]
	\centering
		\begin{tabular}{*{6}{|c|c|c|c|c|c}}
		\hline
			$\hat{\sigma_1}$ & $\hat{\nu_1}$ & $\hat{b}$ & $\hat{\sigma_2}$ & $\hat{\nu_2}$ & $\hat{\theta_2}$\\
			\hline
			$0.2835$ & $0.0804$ & $39.86$ & $0.3142$ & $0.1023$ &-0.019\\
			\hline
		\end{tabular}
	\scriptsize
	\caption{\footnotesize{Estimated Parameters}}\label{tab:parameters}
\end{table}	
\begin{figure}
        \begin{subfigure}[c]{.3\textwidth}{
                \includegraphics[width=50mm]{./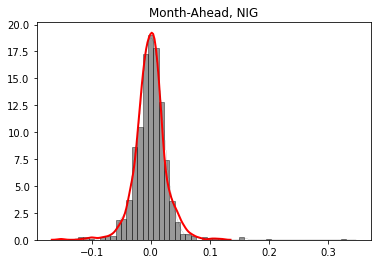}
                }
					\caption{Month-ahead}\label{fig:ncg:ma}			
				\end{subfigure}\,\,\,
        \begin{subfigure}[c]{.3\textwidth}{
                \includegraphics[width=50mm]{./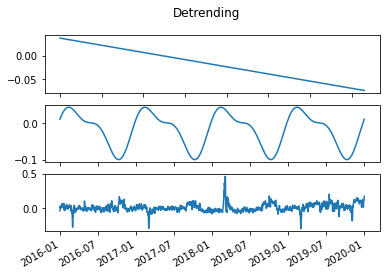}
                }
					\caption{Detrend}\label{fig:ncg:detrending}								
        \end{subfigure}\,\,\,
        \begin{subfigure}[c]{.3\textwidth}{
                \includegraphics[width=50mm]{./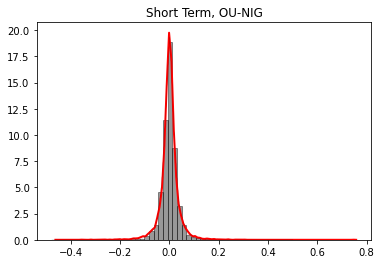}
                }
					\caption{Short-term}\label{fig:ncg:da}								
        \end{subfigure}
\label{fig:calibration}				
\end{figure}

\subsection{Asian Options}\label{sec:fin:app:asian}
As a second financial application we consider the pricing of Asian options with MC simulations in a two-factor market model using the calibration of Section\myref{sec:fin:app:two:factors} and let also $\alpha$ vary. Of course, the parameter estimation carried out in Section\myref{sec:fin:app:two:factors} assumed that the two factors were both NIG distributed with $\alpha=0.5$, one might change $\alpha$ for each factor. In addition, one might argue that for a general NTS distributed factor the remaining parameters should also be different that those with $\alpha=0.5$. Nevertheless, we only let the stability parameter change for both the two factors and study its impact on the option price.

It is worthwhile noting that knowing the \chf's of the two factors of the market model, one could also price Asian options with FFT methods (see Zhang and C. Oosterlee\mycite{ZhangOosterlee13_a}) that are faster than MC methods. On the other hand, the former approach  provides a view on the distribution of the potential cash-flows of derivative contracts giving a precious information to risk managers or to trading units. 

We recall that the payoff at maturity $T$ of an Asian option with European style and strike price $K$ is
\begin{equation*}
	A(K, T) = \left(\frac{\sum_{i=1}^{I}}{I}S(t_i) - K\right)^+.
\end{equation*}
Our numerical example considers a forward start contract with $I=90$ that starts settling at $T_1=1/4$ with maturity $T=1/2$ and $K=11.5$. 

Since the forward start feature and the observations of Section\ref{sub:sec:ou:sncts:num}, we rely on the exact simulation scheme only for the OU-SNTS component because, even though the presence of the second factor could shadow the impact, one single large time point could create a bias in the option price. 

Figure\myref{tab:asian} shows the prices and the errors defined as the sample standard deviation divided by the number of scenarios $N$ changing the parameter $\alpha$. We can conclude that the MC procedure is convergent but that at least $N=5\times 10^4$ scenarios are required to attain a reliable price and that $\alpha$ has a non-negligible impact. Indeed, the price difference with two consecutive $\alpha$'s in Table\myref{tab:asian} is of circa ten percent that is a large number for an Asian option that settles over three months only. 

\begin{table}[ht!]
	\centering
	\resizebox{\textwidth}{!}{
		\begin{tabular}{*{11}{|c|c|c|c|c|c|c|c|c|c|c}}
		\hline
			& \multicolumn{2}{c|}{$\alpha=0.1$} & \multicolumn{2}{c|}{$\alpha=0.3$} & \multicolumn{2}{c|}{$\alpha=0.5$} & \multicolumn{2}{c|}{$\alpha=0.7$} & \multicolumn{2}{c|}{$\alpha=0.9$}\\
			\hline
			$N$ & price & error & price & error & price & error & price & error & price & error\\		
			\hline
			$1000$ & $0.830$ & $0.044$ & $0.834$ & $0.044$ & $0.847$ & $0.045$ & $0.822$ & $0.044$ & $0.855$ & $0.047$\\
			$10000$ & $0.800$ & $0.015$ & $0.801$ & $0.013$ & $0.808$ & $0.014$ & $0.792$ & $0.014$ & $0.810$ & $0.014$\\
			$20000$ & $0.7812$ & $0.0100$ & $0.7894$ & $0.0010$ & $0.7989$ & $0.0099$ & $0.8036$ & $0.0100$ & $0.8154$ & $0.0099$\\
			$50000$ & $0.7719$ & $0.0063$ & $0.7814$ & $0.0063$ & $0.7906$ & $0.0062$ & $0.8011$ & $0.0063$ & $0.8096$ & $0.0062$\\
			$100000$ & $0.7703$ & $0.0043$ & $0.7806$ & $0.0043$ & $0.7909$ & $0.0044$ & $0.8008$ & $0.0044$ & $0.8114$ & $0.0044$\\
			\hline
		\end{tabular}
	}
	\scriptsize
	\caption{\footnotesize{Forward start Asian option.}}\label{tab:asian}
\end{table}

\subsection{Swing Options}\label{sec:fin:app:swing}
As a last application we consider the pricing of swing options that are typical components of gas contracts, which offer the opportunity to vary the contracted volume under a number of restrictions. We consider the day-ahead market model illustrated in Section\myref{sec:fin:app:two:factors} and its relative estimated parameters, however, in contrast to the previous example, we take NIG-based factors as per the original calibration only.

Let the maturity date $T$ be fixed and the payoff at time $t<T$ be given by $(S(t)-K)^+$
where  $K$ denotes the strike price, we also assume that only one unit of the underlying can be exercised any
time period. Let $V(n, s, t)$ denote the price of a swing option at time $t$ given the spot price $s$
which has $n$ out of $N$ exercise rights left. For $m=1,\dots, M_S$, the dynamic programming principle allows us to
write 
\begin{equation}
V(n, s, t_m) =\max
              \left\{
                   \begin{array}{ll}
									 \EXP{V(n, S(t_{m+1}), t_{m+1})|S(t_m) = s}, \\
                                 \\
									\EXP{V(n, S(t_{m+1}), t_{m+1})|S(t_m) = s} + (s-K)^+
                   \end{array}
             \right\},\quad n < N 
\label{eq:swing}
\end{equation}
and $V(n, s, T) = (S(T) - K)^+$, $n \le N$ and $V(0, s, t)=0$.
We solve the recursion equation with the modified
version of the LSMC, introduced in
Longstaff-Schwartz\mycite{LSW01}, detailed in Hambly et al\mycite{HHK09} where the continuation value is approximated with a linear regression using power polynomials with  $m=1,\dots, M_S$
\begin{equation*}
	\EXP{V(n,  S(t_{m+1}), t_{m+1})|S(t_m) = s}\simeq a_0 + a_1 S(t_m) + \dots, +a_B S^B(t_m), \quad n < N.
\end{equation*}
We used $B=3$, but the regression may be performed on a different set of basis functions as well.

Several other approaches have been proposed: for instance, one may solve the recursion by adapting the method of Ben-Ameur et al.\mycite{BBKL2007} or might use the quantization technique of Bardou et al.\mycite{BBP07}. In alternative, one can also use the tree method of Jaillet et al.\mycite{JRT04} or the Fourier cosine expansion in Zhang and C. Oosterlee\mycite{ZhangOosterlee13_b} taking advantage of the explicit form of the \chf's of the two factors. 

Table\myref{tab:swing} shows the values and MC errors relatively to the pricing of a $120-120$ forward start swing option with maturity $T=1 + 1/3$ and strike price $K=11.5$, namely the holder has $N=120$ rights and must exercise all of them. We observe that the LSMC combined with Algorithm\myref{alg:sim:ou:sncts} for the OU-NIG factor - and a standard procedure for the generation of the NIG process- produces unbiased results and apparently $2\times 10^4$ simulations are required to attain an acceptable convergence.  Overall, it is evident that our newly developed approach can achieve high accuracy as well as efficiency.
\begin{table}[ht!]
	\centering
		\begin{tabular}{*{6}{|c|c|c|c|c|c}}
		\hline
			$N$ & $1000$ & $10000$ & $20000$ & $50000$ & $100000$\\
			\hline
			price & $451.1$ & $454.37$ & $457.90$ & $459.34$ & $459.69$\\
			error & $15.7$ & $5.20$ & $3.69$ & $2.33$ & $1.66$\\
			\hline
		\end{tabular}
	\scriptsize
	\caption{\footnotesize{$120-120$ swing option.}}\label{tab:swing}
\end{table}


\subsection{Application to Forward Markets}\label{sec:extensions}
The spectrum of financial applications of the results and the simulation procedures illustrated in Section\myref{sec:ou:ncts} is not restricted to the modeling of day-ahead commodity prices and to examples involving mean-reverting OU processes. For instance, it is common practice to model the evolution $f(t, T)$ of a forward contract  at time $t$ with delivery at $T$ as a geometric BM with a time-dependent volatility function in order captures the Samuelson effect. In particular, Kiesel et al.\mycite{KSB09} have considered the following two-factor market dynamics 
\begin{eqnarray*}
f(t, T) &=& f(0,T)\,\exp\left\{h_1(t, T) + h_2(t, T) + \sigma_1\int_0^te^{-b\,(T - u)}\,dW_1(u) + \sigma_2\,W_2(t)\right\}\\
 &=& f(0,T)\,\exp\left\{h_1(t, T) + h_2(t, T) + X_1(t, T) + X_2(t)\right\}
\end{eqnarray*}
where $W_1(\cdot)$ and $W_2(\cdot)$ are correlated or independent Brownian motions and $h_1(t, T)$ and $h_2(t, T)$ are two time-dependent functions that ensure that $f(\cdot, T)$ is a martingale. From such a model one can obtain day-ahead price evolution letting $t=T$. On the other hand, leaving the Gaussian framework by replacing the two Wiener processes by two general \Levy\  processes $L_1(\cdot)$ and $L_2(\cdot)$, one can rewrite the first factor as $X_1(t, T) = e^{-b\,T}\int_0^te^{-b\,(t - u)}\,dL_1(u) = e^{-b\,T}Z_1(t)$. Of course, when $L_1(\cdot)$ is  a SNTS process, $Z_1(\cdot)$ becomes the additive process studied in Section\myref{sec:ou:ncts}, and therefore we can easily adapt all the results obtained so far. 

The contract above is a fictitious contract often used to build the stochastic evolution of a traded contract $F(t, T_1, T_2)$, sometimes called swap in energy markets, at time $t$, maturity $T$, $t\le T\le T_1 < T_2$, and with delivery period $[T_1, T_2]$. In alternative, it is also common practice to directly model  $F(t, T_1, T_2)$ as
\begin{equation*}
F(t, T_1, T_2) = F(0, T_1, T_2)e^{\int_0^t\Gamma_1(u, T_1, T_2) dL_1(u) + \Gamma_2(T_1, T_2)\,X_2(t) - m_1(t, T_1, T_2) - m_2(t, T_1, T_2)}
\end{equation*}
where 
$m_1(t, T_1, T_2)$ and $m_2(t, T_1, T_2)$ are two deterministic functions that ensure that $F(\cdot, T_1, T_2)$ is a martingale.

Instead of an exponential setting, recently Piccirilli et al.\mycite{PSV20} have recently proposed a class of models, named Non-Overlapping-Arbitrage models (NOA), with the aim or capturing the Samuelson effect and reproducing the different levels and shapes of the implied volatility profiles displayed by options. In particular, they assume a NIG setting where the 
\begin{eqnarray}
	F(t, T_1, T_2) &=& F(0, T_1, T_2) + \int_0^t\Gamma_1(u, T_1, T_2) dL_1(u) + \Gamma_2(T_1, T_2)\,X_2(t) = \nonumber \\
	&& F(0, T_1, T_2) + X_1(t, T_1, T_2) + \Gamma_2(T_1, T_2)\,X_2(t),
\label{eq:noa}
\end{eqnarray}
where $L_1(\cdot)$ and $X_2(\cdot)$ are two independent centered NIG processes. In this specific case  
\begin{equation}
\Gamma_1(u, T_1, T_2) = \frac{\gamma_1}{b\,(T_2 - T_1)}\left(e^{-b\,(T_1 -u)} - e^{-b\,(T_2 -u)}\right).
\label{eq:gamma1}
\end{equation}
\begin{equation}
\Gamma(T_1, T_2) = \frac{1}{T_2 - T_1}\int_{T_1}^{T_2}\gamma(u)\,du
\label{eq:gamma2}
\end{equation}
are two deterministic functions that are meant to capture the Samuelson effect in option pricing (see also Jaeck and Lautier \mycite{JL16}). Indeed, in the spirit of Benth et al.\mycite{BPV19} and Latini et al.\mycite{LPV19}, the special form of the coefficients
arises from the implicitly underlying assumption that the swap can be written as the average
over an underlying artificial futures price with instantaneous delivery.

Of course, such models are related to the additive process studied in Section\myref{sec:ou:ncts}, because, after some algebra, it results
\begin{eqnarray*}
	 X_1(t, T_1, T_2) &=& \frac{\gamma_1}{b\,(T_2 - T_1)}\left(e^{-b\,(T_1 -T)} - e^{-b\,(T_2 -
	T)}\right) \int_0^t e^{-b\,(t -u)} dL_1(u)  \\
	&=&  \Gamma_1(T, T_1, T_2) Z(t), 
\end{eqnarray*}
hence the \chf\ and in the particular, the simulation procedure of the skeleton of the additive process $X_1(\cdot, T_1, T_2)$ can be derived from those of $Z(\cdot)$ under the assumption that $L_1(\cdot)$ is symmetric. It is worthwhile noticing that the Piccirilli et al. found an explicit form of the \chf\ of $X_1(t, T_1, T_2)$ when $L_1(\cdot)$ is a centered NIG process that can be extended to symmetric NTS processes based on Proposition\myref{prop:chf:ou:sncts}, we omit an explicit proof to avoid overloading the paper with lengthy details.   

We remark then that our simulation procedure enlarges the applicability of NOA models and other models based on the additive process $Z_1(\cdot)$ driven by NIG and in general NTS processes. Indeed, trading units rely on quoted options as a starting point to  calibrate the parameters of their models to then price other derivative contracts or to implement trading strategies. Some of them can be evaluated without MC methods however, simulation approaches are employed by risk managers to derive information about the distribution of the traded portfolios and a view on their risk profile. To this end, it is fundamental to rely on exact and non-biased simulation schemes with sufficiently fast computational speed that can be run with different parameter settings and market conditions.

\section{Concluding Remarks}\label{sec:conclusions}

In this study we have investigated the pricing of energy derivatives in markets driven by NTS processes that generalize NIG processes which are commonly used in the modeling of energy markets. To this end, the first contribution of our study is the full description of NTS processes of OU type, along with the derivation of characteristic function of their transition law in closed form. This result is instrumental to determine their statistical properties and compared to the current state of affairs (see for instance Benth and Benth\mycite{BenthBenth04} and Benth et al.\mycite{BKM07}) we derive non-arbitrage conditions for markets driven by such processes without resorting to numerical approximations or numerical integrations. 

The second contribution consists in defining an efficient algorithm for the exact simulation of the trajectories of symmetric OU-NTS and OU-NIG processes that is particularly suitable for forward start contracts, where the standard Euler scheme would return biased results. Indeed, even though it is common practice to rely on such an approximated scheme, we have shown that its performance quickly deteriorates. We also propose an alternative approximation scheme that apparently outperforms the Euler approximation having the same computational effort.

We have illustrated the applicability of these results to the pricing of three common derivative contracts in energy markets, namely a strip of daily call options, an Asian option with European style and a swing option.
In the first example we have considered a one-factor model purely driven by a single mean-reverting OU-NTS process. We have made use of the explicit knowledge of the \chf\ to implement the pricing with the FFT-based technique of Carr and Madan\mycite{Carr1999OptionVU}, and then have compared the outcomes to those obtained via MC simulations. We have noticed that such a model, although performing reasonably well in a wide range of applications  (see once again Benth and Benth\mycite{BenthBenth04} and Benth et al.\mycite{BKM07}), is not suitable for the evaluation of contracts with long maturities which drove us to add a second factor like in the setting proposed by de Jong and Schneider\mycite{DJS09} in the Gaussian framework.

In the second example, we have calibrated the two factor model taking the German day-ahead NCG prices and have evaluated a forward start Asian option. Based on our theoretical results, we have conceived a joint calibration-simulation strategy: under the assumption that one of the factors is a symmetric OU-NIG process we have suggested to estimate its parameters using the proposed approximation, whereas we have proposed to use the exact simulation scheme for the pricing of energy derivatives especially if they are forward start contracts. We also have studied the impact of moving from a NIG model to a general NTS model and have found that this has a non-negligible impact on the Asian option price.

In addition, we have shown that the proposed exact simulation algorithm, combined with the LSMC approach of Hambly et al\mycite{HHK09}, provides an efficient and accurate pricing of an one year $120-120$ swing option. Furthermore, our results are not restricted to OU processes and to the modeling of spot prices. Indeed, in the spirit of Benth et al.\mycite{BPV19}, Latini et al.\mycite{LPV19} and Piccirilli et al.\mycite{PSV20} they can be adapted to capture the Samuelson effect and to volatility smiles.

Finally, future studies could cover the extension to  a multidimensional
framework for instance adopting the view of Luciano and Semeraro\mycite{SL2010}, Ballotta and Bonfiglioli\mycite{BB2013} or the recent approaches of Gardini et al.\mycite{Gardini20b, Gardini20a} and Lu\mycite{lu2020}. 

We remark, that all the algorithms that we have discussed are based
on the sequential-forward generation of processes. On the other hand, as shown in Pellegrino and
Sabino\mycite{PellegrinoSabino15} and Sabino\mycite{Sabino20a}, the backward simulation is more suitable to the LSMC method and provides a computational advantage in the pricing of swings and storages. Therefore, a last topic deserving further investigation is the study of the simulation of OU-NTS processes backward in time.  

\section*{Acknowledgements}
I would like to express my gratitude to Prof. Simone Boehrer for her critical and constructive comments which helped to improve this article.
        \bibliographystyle{plain}
        \bibliography{biblioAll}
\end{document}